\documentclass[11pt]{article}
\usepackage[utf8]{inputenc}
\usepackage[top = 1in, bottom = 1in, left = 1in, right =1in]{geometry}
\usepackage{latexsym,amsmath,amssymb,amsthm, graphicx}
\usepackage{algorithm,algorithmicx,algpseudocode}
\usepackage{enumitem}
\usepackage{hyperref}
\usepackage{mathtools}
\usepackage{xcolor}
\usepackage{caption}
\usepackage{subcaption}
\usepackage{graphicx}
\usepackage{soul}
\usepackage{bbold}
\usepackage{dsfont}
\usepackage{appendix}
\usepackage{cite}
\usepackage[makeroom]{cancel}
\allowdisplaybreaks
\makeatletter
\newcommand{\SB}[1]{#1^+} 
\newcommand{\pushright}[1]{\ifmeasuring@#1\else\omit\hfill$\displaystyle#1$\fi\ignorespaces}
\newcommand{\pushleft}[1]{\ifmeasuring@#1\else\omit$\displaystyle#1$\hfill\fi\ignorespaces}
\makeatother

\setlist[itemize]{leftmargin=*}

\newcounter{thm_count}
\newcounter{claim_count}
\theoremstyle{remark} % theorem, lemma, and corollary environments
\newtheorem*{theorem*}{Theorem} %usage: \begin{theorem*} ...\end{theorem*}
\newtheorem{theorem}[thm_count]{\bf Theorem} %usage: \begin{theorem} ...\end{theorem}
\newtheorem{lemma}[thm_count]{\bf Lemma} %usage: \begin{lemma} ...\end{lemma}
 %usage: \begin{corollary} ...\end{corollary}
\newtheorem{claim}[claim_count]{\bf Claim} %usage: \begin{corollary} ...\end{corollary}

\newcounter{assump}

\theoremstyle{remark}

\newtheorem{assumption}[assump]{\bf Assumption}

\renewenvironment{proof}[1][Proof:]{\begin{trivlist}
\item[\hskip \labelsep {\bfseries #1}]}{\end{trivlist}}

\title{Distributed Optimization of Convex Sum of Non-Convex Functions \footnote{This research is supported in part by National Science Foundation. Any opinions, findings, and conclusions or recommendations expressed here are those of the authors and do not necessarily reflect the views of the funding agencies or the U.S. government.}}

\author{Shripad Gade $\qquad$ Nitin H. Vaidya \\ \\ Department of Electrical and Computer Engineering, and \\ Coordinated Science Laboratory, \\ University of Illinois at Urbana-Champaign. \\ Email: \{gade3, nhv\}@illinois.edu \\ \\ Technical Report}

\date{\today}

\begin{document}

\maketitle

\begin{abstract}
We present a distributed solution to optimizing a convex function composed of several non-convex functions. Each non-convex function is privately stored with an agent while the agents communicate with neighbors to form a network. We show that coupled consensus and projected gradient descent algorithm proposed in \cite{ram2010distributed} can optimize convex sum of non-convex functions under an additional assumption on gradient Lipschitzness. We further discuss the applications of this analysis in improving privacy in distributed optimization.  
\end{abstract}

\section{Introduction}
Distributed convex optimization has found numerous applications in resource allocation \cite{xiao2006optimal}, robotics \cite{bullo2009distributed}, machine learning \cite{boyd2011distributed} etc. Several works have appeared in distributed convex optimization domain over the past decade \cite{nesterov2012efficiency,liu2015asynchronous,agarwal2011distributed,Singh2014,Nedic2007,nedic2011asynchronous,Nedi2015,zhang2014asynchronous,huang2015differentially,rabbat2004distributed,zhu2012distributed,gade16distoptclientserver,sra2012optimization}. Recently, a few papers \cite{bianchi2013convergence,di2016next,sun2016distributed} have dealt with distributed non-convex optimization \footnote{For a thorough treatment of prior literature, readers are directed to papers cited here and references therein.}. The classical distributed optimization problem involves finding a state vector $x^*$ in a feasible set $\mathcal{X}$ such that $x^* \in {\text{argmin }}_{x \in \mathcal{X}} \sum_{i=1}^S f_i(x)$, assuming each individual objective function $f_i(x)$ being convex and its gradients being bounded. 

In this report we prove that the convexity assumption on individual objective functions can be relaxed as long as the sum is still a convex function and individual function gradients are Lipschitz continuous. Formally, we can find $x^*$ using a distributed protocol (Algorithm~\ref{Algo:IterDistOptNCFun}) such that, $$x^* \in \underset{x \in \mathcal{X}}{\text{argmin }} f(x)  \triangleq \sum_{i=1}^S f_i(x),$$ where the individual functions $f_i(x)$ may be non-convex, however $f(x)$ is convex. We designate this partitioning setting ($f_i(x)$ being non-convex while $f(x)$ being convex) as convex sum of non-convex functions.

\subsection{Motivating Example - Privacy Enhancing Distributed Optimization} \label{Sec:Motivation}

Let us review a simple distributed optimization problem to demonstrate how privacy requirements motivate this work. We consider $S = 3$ agents each endowed with a private, convex function $f_i(x)$, where agents intend to minimize the convex function $f(x) = \sum_{i = 1}^S f(x)$, while having access to only their own private objective function. This is a standard distributed convex optimization problem that has been extensively studied over the past decade. Figure~\ref{Fig:F-1} denotes the communication topology of the agents. Distributed algorithms (e.g. Algorithm~\ref{Algo:IterDistOptNCFun}, \cite{ram2010distributed}) solve this problem by sharing states and performing local gradient descent updates (assuming some underlying connectivity among agents). Now consider a scenario where agents wish to introduce privacy in the optimization protocol. 

\begin{figure}[h]
\centering
  \includegraphics[width=.5\linewidth]{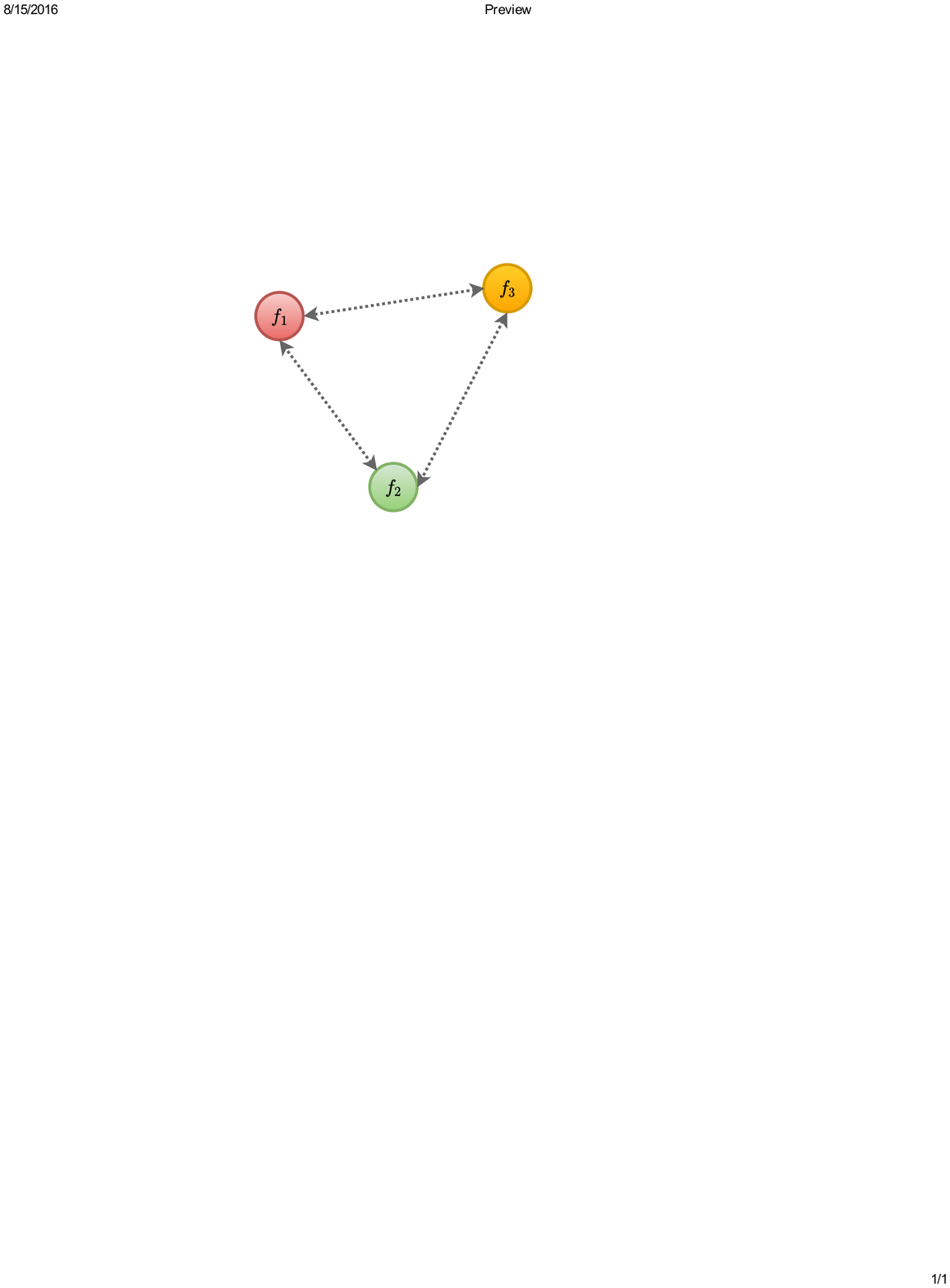}
  \caption{Standard distributed optimization problem: $f_i(x)$ are private, convex functions, $\|\nabla f_i(x)\| \leq L_i$ and we intend to minimize $f(x) = \sum_{i = 1}^S f(x)$ over some constraint set $\mathcal{X}$.}
  \label{Fig:F-1}
\end{figure}

Let every agent partition its objective function into additive components. That is, $f_1(x) = f_{1,2}(x) + f_{1,3}(x)$, $f_2(x) = f_{2,1}(x) + f_{2,3}(x)$ and $f_3(x) = f_{3,1}(x) + f_{3,2}(x)$ where the components may or may not be convex (see Figure~\ref{Fig:F-2}). The protocol is now modified so that in the gradient descent step (Eq.~\ref{Eq:ProjGradALG2}), agents use different component gradients instead of using $f_i(x)$. For example, when Agent 2 needs to perform gradient descent step and share states with Agent 1 and 3, Agent 2 uses $\nabla f_{2,1}(x)$ for descent update and sends state to Agent 1; and uses $\nabla f_{2,3}(x)$ for descent and sends states to Agent 3. Similarly, Agent 3 sends state to Agent 1 after performing gradient descent using $\nabla f_{3,1}(x)$ and uses $\nabla f_{3,2}(x)$ for descent before sending updates to Agent 2. Application of different gradients allows the transmitted states (sent to different agents) to be different. Agents can hide the information about their individual (and private) objective functions $f_i(x)$. This improves privacy while still allowing all agents to learn the correct model (reach optimum). 

\begin{figure}[h]
\begin{subfigure}{0.48\textwidth}
  \centering
  \includegraphics[width=.9\linewidth]{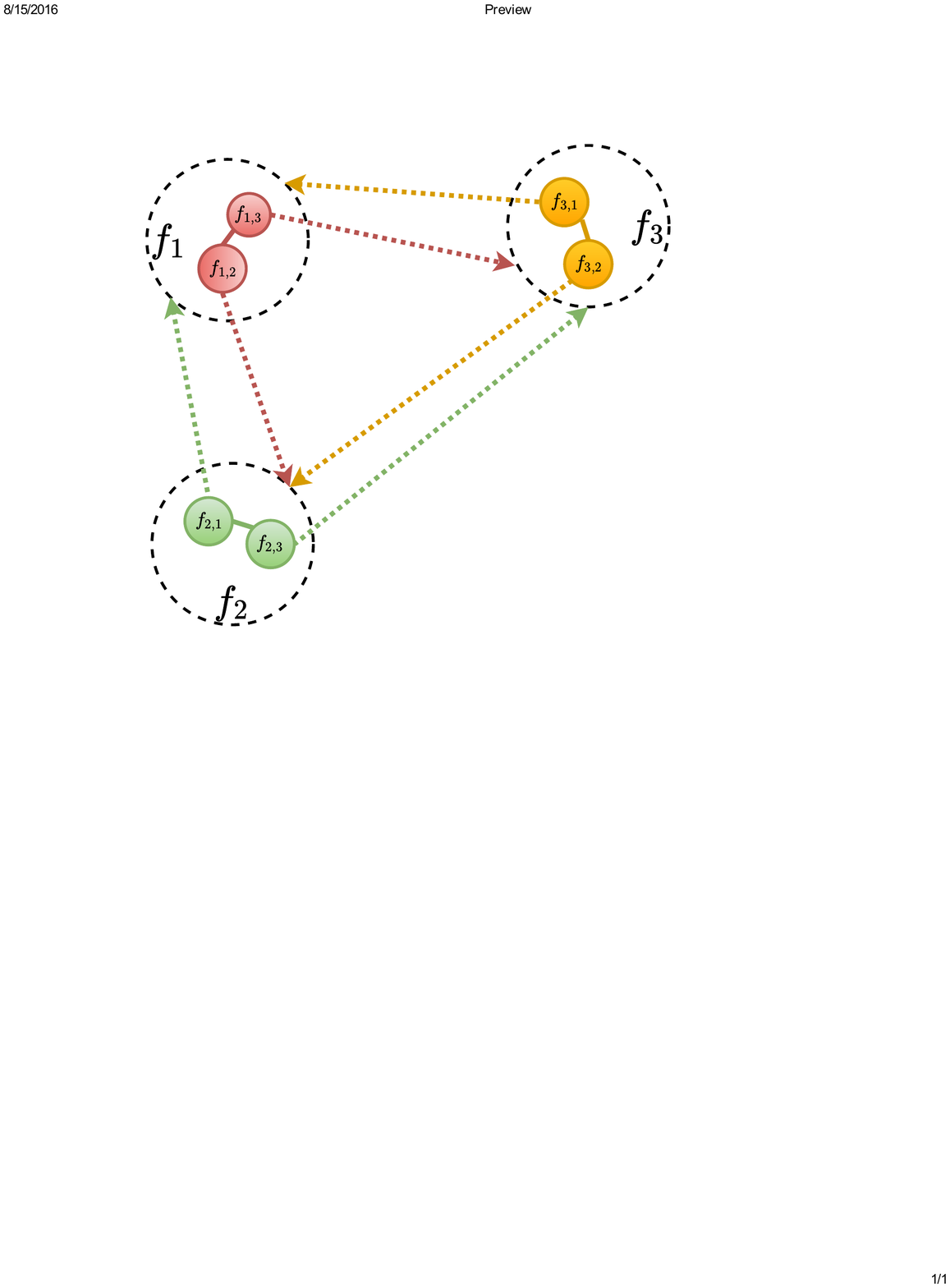}
  \caption{Individual functions are partitioned into possibly non-convex components for privacy.}
  \label{Fig:F-2}
\end{subfigure} \hfill
\begin{subfigure}{0.48\textwidth}
  \centering
  \includegraphics[width=.95\linewidth]{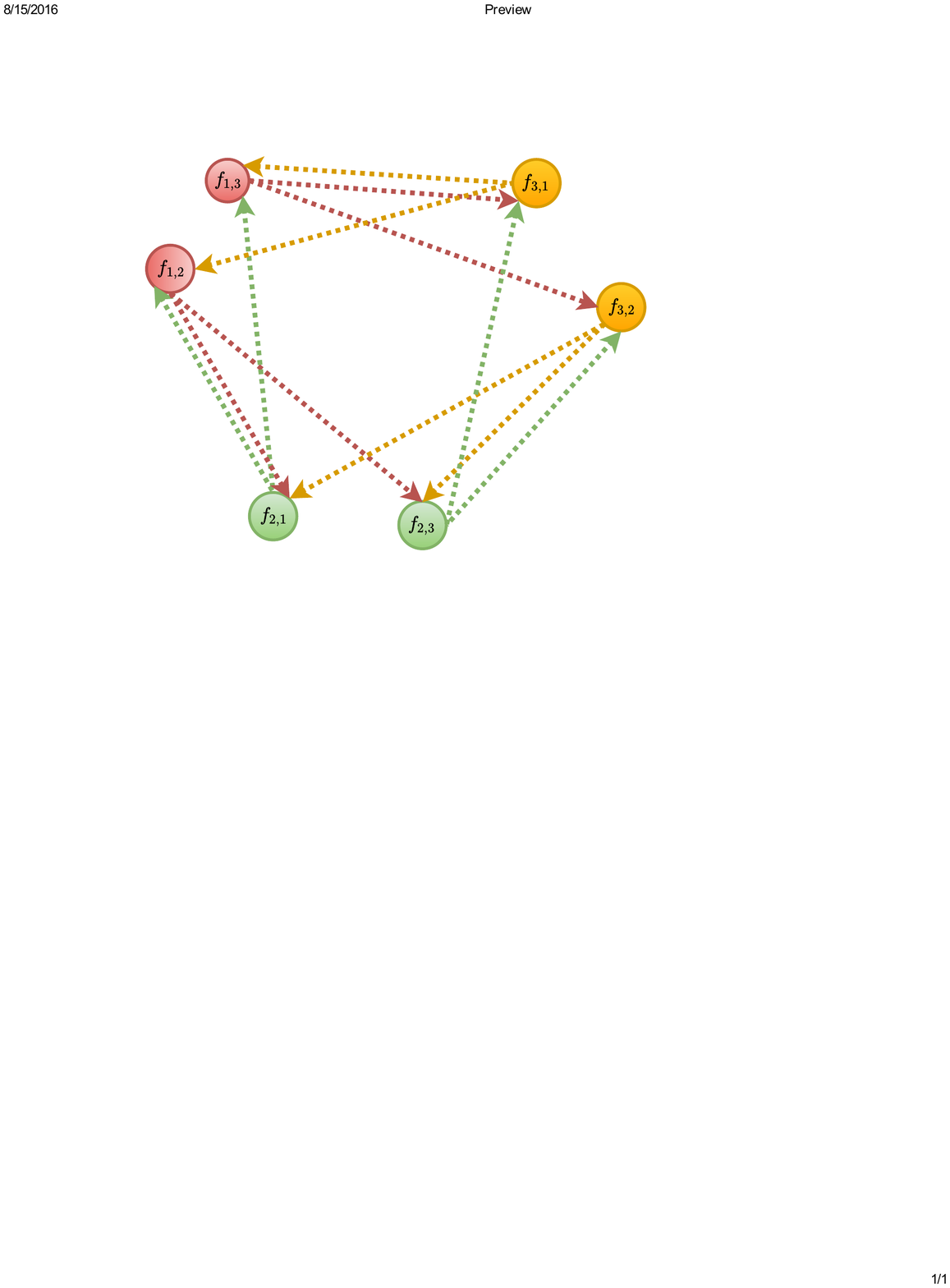}
  \caption{Distributed optimization of convex sum of non-convex functions.}
  \label{Fig:F-3}
\end{subfigure}%
\caption{A Motivation Example: Privacy Enhancing Distributed Optimization}
\label{Fig:MotExample}
\end{figure}

We now view each component function as being associated to a virtual agent, with any agent being represented by two virtual agents in our example ($S = 6$). The links between virtual agents must be a subset of the links between the corresponding real agents. By choosing an appropriate set of links
between the virtual agents, we can ensure that the network of virtual agents is strongly connected, provided that the original network is strongly connected. Figures~\ref{Fig:F-2} and \ref{Fig:F-3} present an example.This is motivated by the desire to improve privacy. The topology in Figure~\ref{Fig:F-2} should enhance privacy in the sense that a single real agent (on its own) will
not learn the cost function of other agents.

Note, that the individual components (in our example) are not necessarily convex; and we intend to minimize their sum, $$f(x) = f_{1,2}(x) + f_{1,3}(x) + f_{2,1}(x) + f_{2,3}(x) + f_{3,1}(x) + f_{3,2}(x).$$ This new problem fits exactly in the structure defined in Section~\ref{Sec:Problem}. As long as the original network in Figure~\ref{Fig:F-2} is strongly connected, we can ensure that the new network in Figure~\ref{Fig:F-3} also stays strongly connected. The doubly stochastic, transition matrix $B_k$ can be defined as,
\begin{align}
    B_k = 
    \begin{bmatrix}
    (1-2\kappa)& 0& \kappa& 0 & 0 & \kappa\\
    0 & (1-2\kappa) & \kappa& 0 & 0 & \kappa\\
    0 & \kappa& (1-2\kappa) & 0 & \kappa& 0 \\
    0 & \kappa& 0  & (1-2\kappa) & \kappa& 0 \\
    \kappa& 0 & 0 & \kappa& (1-2\kappa) & 0  \\
    \kappa& 0 & 0 & \kappa& 0 & (1-2\kappa) \\
    \end{bmatrix}. && \ldots \kappa\geq 0
\end{align}

None of the agents sees information from all virtual agents (part of any specific agent). Thus, agents can protect their private objective function yet solve the distributed optimization problem correctly. In the report we provide theoretical analysis and convergence proofs for distributed optimization algorithm in \cite{ram2010distributed} to optimize convex sum of non-convex functions $f(x)$, under any arbitrary time varying connected topology. We also present in Section~\ref{Sec:Privacy}, a different privacy enhancing scheme based on secure multi-party aggregation strategy proposed in \cite{abbe2012privacy}. We discuss the application of analysis developed in this report for proving correctness and characterizing convergence of this new scheme. 

\subsection{Notation}
The number of agents is denoted by $S$. Upper case alphabets ($I, J, K$ etc.) are used to index agents. We use the symbol ``$\sim$" to denote communication link and information sharing between agents. As an example, $I \sim G$ denotes that agents $I$ and $G$ have a communication link between them, and conversely $I \cancel{\sim} J$ denotes that agents $I$ and $J$ can not share information (cannot communicate) with each other. The neighborhood set of agent $J$ is denoted by $\mathcal{N}_J$. The dimension of the problem (number of parameters in the decision vector) is denoted by $D$.

Iterations number is denoted by $k$. The decision vector (also referred to as iterate from now on) stored in agent $I$ at time (iteration) $k$ is denoted by $x^I_{k}$, where the superscript denotes the agent-id, the subscript denotes the time index. $x^J_{k}[p]$ ($p = 1, 2, \ldots, D$) denotes the $p^{th}$ dimension in decision vector $x^J_{i,k}$. The average of iterates at time instant $k$ is denoted by $\bar{x}_{k}$. 
\begin{equation}
\bar{x}_{k} = \frac{1}{S} \sum_{J=1}^S x^J_{k}. \label{Eq:deltaDef}
\end{equation}
We denote the disagreement of an iterate ($x^J_{k}$) with the iterate average ($\bar{x}_{k}$) by $\delta^J_{k}$.
\begin{equation}
\delta^J_{k} = x^J_{k} - \bar{x}_{k}. \label{Eq:deltaDef2}
\end{equation}

\noindent We use $\tilde{.}$ to denote a vector that is stacked by its coordinates. As an example, consider three vectors in $\mathbb{R}^3$ given by ${a_1} = [a_x, \ a_y, \ a_z]^T$, ${a_2} = [b_x, \ b_y, \ b_z]^T$, ${a_3} = [c_x, \ c_y, \ c_z]^T$. Let us represent ${a} = [{a_1}, \ {a_2}, \ {a_3}]^T$, then $\tilde{{a}} = [a_x, \ b_x, \ c_x, \ a_y, \ b_y, \ c_y, \ a_z, \ b_z, \ c_z]^T$. Similarly we can write stacked model parameter vector as, 
\begin{equation}
\tilde{x}_{0,k} = [x^1_{0,k}[1], x^2_{0,k}[1], \ldots, x^S_{0,k}[1], x^1_{0,k}[2], x^2_{0,k}[2], \ldots, x^S_{0,k}[2], \ldots, x^1_{0,k}[D], \ldots, x^S_{0,k}[D]]^T. \label{Eq:TildeVecDef}
\end{equation}

We use $g_h(x^J_{k})$ to denote the gradient of function $f_h(x)$ evaluated at $x^J_{k}$. Let, $L_h$ be the bound on gradient (see Assumption~\ref{Asmp:SubBound}) and $N_h$ be the Lipschitz constant (see Assumption~\ref{Asmp:GradLip}), for all $h = 1, 2, \ldots, S$. We define constants $\SB{L} = \sum_{h=1}^S L_h$ and $\SB{N} = \sum_{h=1}^S N_h$ to be used later in the analysis. $\|.\|$ denotes standard Euclidean norm for vectors, and matrix 2-norm for matrices ($\|A\| = \sqrt{\lambda_{\max}(A^\dagger A)} = \sigma_{\max}(A)$, where $A^\dagger$ denotes conjugate transpose of matrix A, and $\lambda, \ \sigma$  are eigenvalues and singular values respectively). 

Throughout this report, we will use the following definitions and notation regarding the optimal solution ($x^*$), the set of all optima ($\mathcal{X}^*$) and the function value at optima ($f^*$),
$$ f^* = \inf_{x \in \mathcal{X}} f(x), \qquad \mathcal{X}^* = \{x \in \mathcal{X} | f(x) = f^*\}, \qquad dist(x, \mathcal{X}^*) = \inf_{x^* \in \mathcal{X}^*} \|x - x^*\|.$$ 
\noindent The optimal function value, at the solution of the optimization problem or the minimizing state vector is denoted by $x^*$, is denoted by $f^*$.

\subsection{Organization}
Problem formulation, assumptions and framework is presetned in Section~\ref{Sec:Problem}. Consensus and Projected Gradient based algorithm (similar to ~\cite{nedic2009distributed,ram2010distributed})  is summarized in Section~\ref{Sec:Algorithm}. Convergence analysis, and correctness proofs are presented in Section~\ref{Sec:ConvAnalysis}. Discussion on applications of this framework to privacy is presented in Section~\ref{Sec:Privacy}.

\section{Problem Formulation and Algorithm} \label{Sec:Problem}
Let us consider $S$ agents, each of whom has access to a private, possibly non-convex function $f_i(x)$. We intend to solve the following optimization problem in a distributed manner,
$$\text{Find} \; x^* \in \underset{x \in \mathcal{X}}{\text{argmin}} \; f(x), $$
where $f(x) = \sum_{i=1}^S f_i(x)$ is a convex function. The agents communicate with their neighbors and share parameter state estimates. This communication graph is assumed to be bidirectional and connected (see Assumption~\ref{Asmp:DeltaConn}).  

We enforce the following assumption on the functions $f_i(x)$ and on the decision set, $\mathcal{X}$.  
\begin{assumption} [Objective Functions] \label{Asmp:Function} 
The objective functions $f_i : \mathbb{R}^D \rightarrow \mathbb{R}, \; \forall \; i = 1, 2,{ }\ldots, S$ may or may not be convex functions of model parameter vector $x$. However, the sum of individual objective functions is convex, i.e., $f(x) := \sum_{i=1}^S f_i(x)$ is convex. 
\end{assumption}
\begin{assumption} [Decision Set]
The feasible parameter vector set, $\mathcal{X}$, is a non-empty, closed, convex, and compact subset of $\mathbb{R}^D$. \label{Asmp:Set}
\end{assumption}
\noindent We make a boundedness assumption on the gradient of function $f_i(x)$ in Assumption~\ref{Asmp:SubBound}. And make an additional assumption on the gradients $g_h$ of functions $f_h$.  
\begin{assumption} [Gradient Boundedness] \label{Asmp:SubBound}
Let $g_i(x)$ denote the gradient of the function $f_i(x)$. There exist scalars $L_1, L_2, \ldots, L_S $ such that, $\| g_i(x) \| \leq L_i; \; \forall \; i = 1, 2, \cdots, S \; \text{and} \; \forall x \in \mathcal{X}$.
\end{assumption}
\begin{assumption}[Gradient Lipschitzness]
\label{Asmp:GradLip}
Each function gradient ($g_h(x)$) is assumed to be Lipschitz continuous i.e. there exist scalars $N_h > 0$ such that, $\|g_h(x) - g_h(y) \| \leq N_h \| x -y\|$ for all $x \neq y$ ($x,y \in \mathcal{X}$) and for all $h = 1, 2, \ldots, S$.
\end{assumption}

Agents are connected in an arbitrary time-varying topology, albeit under the assumption that agents form a connected component, Assumption~\ref{Asmp:DeltaConn}. 
\begin{assumption} [Connectedness]
\label{Asmp:DeltaConn}
At every iteration, $k$, there exists a path between any two agents. (Agents form a connected component.)
\end{assumption}

For the purpose of this report we will assume without explicitly stating that all communication links are synchronous and loss less. All agents are assumed to operate perfectly and do not experience any faults.

\subsection{Algorithm} \label{Sec:Algorithm}
We consider iterative algorithm for distributed convex optimization presented in \cite{ram2010distributed}. We show that the existing algorithm can optimize convex sum of non-convex functions. The first step in the algorithm is to fuse information from the neighbors and build an estimate of average of the parameter vector. A doubly stochastic matrices $B_k$, with the property that any entry $B_k[I,J]$ is greater than zero if and only if $I$ and $J$ can communicate with each other, is used for information fusion. Also, we assume that all non-zero entries are lower bounded by $\eta$, i.e. if $B_k[I,J] > 0$ then $B_k[I,J] \geq \eta$ for some constant $\eta>0$. If $\mathcal{N}_J$ denotes the set of agents that can send information to agent $J$, then we can write the fusion step as,
\begin{align}
    v^J_{k} = \sum_{I \in \mathcal{N}_J} B_k[J,I] x^I_{k}. 
    \label{Eq:InfFusALG1}
\end{align}

The information aggregation step is followed by projected gradient descent step. The descent step is formally written as, 
\begin{align}
    x^J_{k+1} = \mathcal{P}_{\mathcal{X}}\left[v^J_{k} - \alpha_k g_J(v^J_{k}) \right]. \label{Eq:ProjGradALG2}
\end{align}
Projected gradient descent is a well known iterative gradient based method that guarantees convergence to optimum under reducing learning rate ($\alpha_k$) \cite{bertsekas1976goldstein}. We assume that the monotonically non-increasing
learning rate/step-size possesses the following properties,
\begin{align}
     \alpha_k > 0, \ \forall k \geq 0; \quad \alpha_{k+1} \leq \alpha_k, \ \forall k \geq 0; \quad  \sum_{k=0}^\infty \alpha_k = \infty; \ \text{and} \quad \sum_{k=0}^\infty \alpha_k^2 < \infty. \label{Eq:LearnStepCond}
\end{align}

\begin{algorithm}
\caption{Distributed Algorithm for Optimization of Convex sum of Non-Convex functions}
\begin{algorithmic}[1]
\State Input: $x^J_k$, $\alpha_k$, NSteps \Comment{NSteps - Termination Criteria}     
\State Result: $x^* = \underset{x \in \mathcal{X}}{\text{argmin}} \sum_{i=1}^{S} f_i(x) $ 
\For {k = 1 to NSteps} 
    \For {J = 1 to S}
        \State $v^J_{k} = \sum_{I \in \mathcal{N}_J} B_k[J,I] x^I_{k}$   \Comment{Information Fusion}
        \State $x^J_{k+1} = \mathcal{P}_\mathcal{X} \left[ v^J_{k} - \alpha_k g_J(v^J_{k})\right]$ \Comment{Projected Gradient Descent}
    \EndFor
\EndFor
\end{algorithmic}
\label{Algo:IterDistOptNCFun}
\end{algorithm}

\section{Convergence Analysis} \label{Sec:ConvAnalysis}
We state two important results that will be useful in convergence analysis, the first being on convergence of non-negative almost supermartingales by Robbins and Siegmund (Theorem 1, \cite{robbins1985convergence}) followed by Lemma 3.1 (b) by Ram \textit{et.al.}, \cite{ram2010distributed}.
\begin{lemma}  \label{Lem:RobSiegConv}
Let ($\Omega, \mathcal{F}, \mathcal{P}$) be a probability space and let $\mathcal{F}_0 \subset \mathcal{F}_1 \subset \ldots$ be a sequence of sub $\sigma-$fields of $\mathcal{F}$. Let $u_k, v_k$ and $w_k$, $k = 0, 1, 2, \ldots$ be non-negative $\mathcal{F}_k-$ measurable random variables and let \{$\gamma_k$\} be a deterministic sequence. Assume that $\sum_{k=0}^\infty \gamma_k < \infty$, and $\sum_{k=0}^\infty w_k < \infty$ and $$E[u_{k+1}|\mathcal{F}_k] \leq  (1+\gamma_k) u_k -v_k + w_k$$ holds with probability 1. Then, the sequence \{$u_k$\} converges to a non-negative random variable and $\sum_{k=0}^\infty v_k < \infty$.
\end{lemma}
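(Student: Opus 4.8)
The plan is to reduce this almost-supermartingale inequality to an honest non-negative supermartingale and then invoke Doob's (super)martingale convergence theorem. First I would exploit the summability $\sum_k \gamma_k < \infty$ to define the partial products $\beta_k = \prod_{j=0}^{k-1}(1+\gamma_j)$ (with $\beta_0 = 1$), which increase to a finite limit $\beta_\infty \in [1,\infty)$; in particular $1 \le \beta_k \le \beta_\infty$ for all $k$. Dividing the hypothesized inequality by $\beta_{k+1}$ and writing $\hat u_k = u_k/\beta_k$, $\hat v_k = v_k/\beta_{k+1}$, $\hat w_k = w_k/\beta_{k+1}$, the factor $(1+\gamma_k) = \beta_{k+1}/\beta_k$ cancels and the recursion becomes the clean relation $E[\hat u_{k+1}\mid \mathcal F_k] \le \hat u_k - \hat v_k + \hat w_k$. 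Because $\beta_{k+1}\ge 1$ we have $\hat w_k \le w_k$, so $\sum_k \hat w_k < \infty$ as well.

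Next I would absorb the remaining noise term into a genuine supermartingale. Setting $W_k = \sum_{j=0}^{k-1}\hat w_j$ (an $\mathcal F_k$-measurable, nondecreasing sequence with finite limit $W_\infty$) and $z_k = \hat u_k - W_k$, a direct computation using $W_{k+1}=W_k+\hat w_k$ gives $E[z_{k+1}\mid\mathcal F_k] \le z_k - \hat v_k \le z_k$, so $\{z_k\}$ is a supermartingale. It is bounded below, since $z_k \ge -W_k \ge -W_\infty$, hence $z_k + W_\infty$ is a non-negative supermartingale. Doob's convergence theorem then yields that $z_k$ converges almost surely to a finite limit $z_\infty$. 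Since $W_k \to W_\infty$ and $\beta_k \to \beta_\infty$, I recover $\hat u_k = z_k + W_k \to z_\infty + W_\infty$ and finally $u_k = \beta_k \hat u_k \to \beta_\infty(z_\infty + W_\infty)$, a non-negative limit as claimed. For the summability of the $v_k$, I would take expectations in the supermartingale inequality to obtain the telescoping bound $\sum_{k=0}^{n} E[\hat v_k] \le E[z_0] - E[z_{n+1}]$, whose right-hand side stays bounded in $n$ because $z_{n+1} \ge -W_\infty$; letting $n\to\infty$ and applying monotone convergence shows $\sum_k \hat v_k < \infty$ almost surely, and then $\sum_k v_k = \sum_k \beta_{k+1}\hat v_k \le \beta_\infty \sum_k \hat v_k < \infty$ almost surely.

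The hard part will be the integrability bookkeeping that makes every conditional expectation above legitimate: the hypotheses only assert that $u_k,v_k,w_k$ are non-negative and $\mathcal F_k$-measurable, not that they are integrable, so a priori $E[z_0]$ and the conditional expectations need not be finite. The standard remedy, which I would follow, is a localization argument: fix a constant $c>0$, define the stopping time $\tau_c = \inf\{k : W_k > c\}$, and run the whole supermartingale argument for the stopped process $z_{k\wedge \tau_c}$, whose negative part is bounded by $c$ so that Doob's theorem and the telescoping estimate apply cleanly. One then lets $c \to \infty$ and uses $\{\tau_c = \infty\}\uparrow\Omega$, which holds because $W_\infty<\infty$ almost surely, to remove the truncation and recover convergence on a set of full measure; a parallel truncation handles the case where $\sum_k w_k$ is finite only almost surely rather than in expectation. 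Apart from this measure-theoretic care, the remaining steps are the routine algebraic manipulations sketched above, and since the statement is quoted verbatim from Robbins and Siegmund, I would ultimately just cite their Theorem~1 rather than reproduce the full argument.
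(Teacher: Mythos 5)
The paper offers no proof of this lemma: it is quoted as Theorem~1 of Robbins and Siegmund and used as a black box. Your sketch is a correct rendition of their standard argument (normalize by the partial products $\prod_{j}(1+\gamma_j)$, absorb the $w_k$ into a genuine supermartingale, apply Doob's convergence theorem, and localize with stopping times so that the conditional expectations and the telescoping bound are legitimate --- which also repairs the only loose step, since $W_\infty$ is not $\mathcal{F}_k$-measurable), and your closing decision to simply cite their Theorem~1 coincides with what the paper does.
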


\begin{lemma} \label{Lem:Ram}
Let $\{\zeta_k\}$ be a non-negative sequence scalar sequence. If $\sum_{k=0}^\infty \zeta_k < \infty$ and $0 < \beta < 1$, then $\sum_{k=0}^\infty \left( \sum_{j=0}^k \beta^{k-j} \zeta_j \right) < \infty$.
\end{lemma}

\noindent The well known non-expansive property (cf. \cite{bertsekas2003convex}) of Euclidean projection onto a non-empty, closed, convex set $\mathcal{X}$, is represented by the following inequality, $\forall \; x, y \in \mathbb{R}^D$, 
\begin{align}
    \| \mathcal{P}_\mathcal{X}[x] - \mathcal{P}_\mathcal{X}[y] \| \leq \|x - y\|. \label{Eq:EUCPROJINEQUAL}
\end{align}

We now present the relationship of server iterates between two consensus steps (at time instants 
$k$ and $k+1$) in the following Lemma.

\begin{lemma} \label{Lem:IterateConvRelation}
Under Assumptions~\ref{Asmp:Function}, \ref{Asmp:Set}, \ref{Asmp:SubBound}, \ref{Asmp:GradLip} and \ref{Asmp:DeltaConn}, the sequence of iterates generated by agents $x^J_k$ using Algorithm~\ref{Algo:IterDistOptNCFun}, satisfies for all $y \in \mathcal{X}$,
\begin{align}
    \sum_{J=1}^S \|x^J_{k+1} - y\|^2 &\leq \left(1 + 2\alpha_k \SB{N} \max_J \|\delta^J_k\| \right)\sum_{J=1}^S \|x^J_k - y\|^2  - 2 \alpha_k \left(f(\bar{x}_k) - f(y) \right) \nonumber \\
    & + 2 \alpha_k \left(\SB{L} + S\SB{N}\right) \max_J \|\delta^J_k\| + \alpha_k^2 \sum_{J=1}^S L_J^2. 
\end{align}
\end{lemma}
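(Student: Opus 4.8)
The plan is to start from the projected-descent update (Eq.~\ref{Eq:ProjGradALG2}), strip the projection using non-expansiveness, expand the resulting square, sum over the agents, and then recenter the troublesome cross term at the average iterate $\bar{x}_k$. First I would use that $y \in \mathcal{X}$ implies $\mathcal{P}_\mathcal{X}[y] = y$, so the non-expansive property (Eq.~\ref{Eq:EUCPROJINEQUAL}) gives
$$\|x^J_{k+1} - y\|^2 \leq \|v^J_k - \alpha_k g_J(v^J_k) - y\|^2 = \|v^J_k - y\|^2 - 2\alpha_k \langle g_J(v^J_k), v^J_k - y\rangle + \alpha_k^2 \|g_J(v^J_k)\|^2.$$
Summing over $J$, the last term is at most $\alpha_k^2 \sum_J L_J^2$ by Assumption~\ref{Asmp:SubBound}. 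For the first term I would exploit double stochasticity of $B_k$: since $v^J_k$ is a convex combination of the $x^I_k$, convexity of $\|\cdot - y\|^2$ gives $\|v^J_k - y\|^2 \leq \sum_I B_k[J,I]\|x^I_k - y\|^2$, and summing over $J$ with $\sum_J B_k[J,I] = 1$ collapses this to $\sum_J \|v^J_k - y\|^2 \leq \sum_J \|x^J_k - y\|^2$, producing the leading term.

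The heart of the argument, and the place where Assumption~\ref{Asmp:GradLip} is indispensable, is the cross term. Because the individual $f_J$ need not be convex, I cannot lower-bound $\langle g_J(v^J_k), v^J_k - y\rangle$ agent-by-agent; convexity is available only for the sum $f$. I would therefore recenter at $\bar{x}_k$. Two preliminary facts make this clean: double stochasticity gives $\bar{v}_k = \bar{x}_k$, and $v^J_k - \bar{x}_k = \sum_I B_k[J,I]\,\delta^I_k$, so $\|v^J_k - \bar{x}_k\| \leq \max_J \|\delta^J_k\|$. Writing $v^J_k - y = (v^J_k - \bar{x}_k) + (\bar{x}_k - y)$ and $g_J(v^J_k) = g_J(\bar{x}_k) + (g_J(v^J_k) - g_J(\bar{x}_k))$, the summed cross term splits into three pieces: (A) $\sum_J \langle g_J(v^J_k), v^J_k - \bar{x}_k\rangle$, bounded in magnitude by $\SB{L}\max_J\|\delta^J_k\|$ via Assumption~\ref{Asmp:SubBound}; (B) $\langle \sum_J g_J(\bar{x}_k),\, \bar{x}_k - y\rangle = \langle \nabla f(\bar{x}_k), \bar{x}_k - y\rangle$, which by convexity of $f$ (Assumption~\ref{Asmp:Function}) is at least $f(\bar{x}_k) - f(y)$, yielding the $-2\alpha_k(f(\bar{x}_k) - f(y))$ term; and (C) $\sum_J \langle g_J(v^J_k) - g_J(\bar{x}_k),\, \bar{x}_k - y\rangle$, bounded by $\SB{N}\max_J\|\delta^J_k\|\,\|\bar{x}_k - y\|$ using Lipschitzness (Assumption~\ref{Asmp:GradLip}) together with $\|v^J_k - \bar{x}_k\| \leq \max_J\|\delta^J_k\|$.

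Finally I would convert the stray factor $\|\bar{x}_k - y\|$ in piece (C) into the two forms appearing in the statement. Using $\|\bar{x}_k - y\| \leq \tfrac{1}{S}\sum_J \|x^J_k - y\|$ and the elementary inequality $a \leq 1 + a^2$ (valid for all $a \geq 0$) applied termwise, one obtains $\|\bar{x}_k - y\| \leq S + \sum_J \|x^J_k - y\|^2$. The quadratic part merges into the multiplicative factor $(1 + 2\alpha_k \SB{N} \max_J\|\delta^J_k\|)$, while the constant $S$ produces the $2\alpha_k S \SB{N}\max_J\|\delta^J_k\|$ contribution. Collecting (A), (B), (C), the consensus bound, and the gradient-norm bound, and grouping the $\SB{L}$ and $S\SB{N}$ terms into $2\alpha_k(\SB{L} + S\SB{N})\max_J\|\delta^J_k\|$, gives exactly the claimed inequality.

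I expect the main obstacle to be conceptual rather than computational: recognizing that convexity must be invoked \emph{only after} the per-agent gradients are all re-evaluated at the common point $\bar{x}_k$, with the Lipschitz assumption paying for the resulting mismatch. This is precisely the mechanism that allows non-convex summands while preserving the usual convex-optimization descent estimate. The only mild technical care required is the degree-matching step that splits $\|\bar{x}_k - y\|$ into a constant piece and a quadratic piece so that the bound assembles into a multiplicative factor times $\sum_J\|x^J_k - y\|^2$ plus additive remainders.
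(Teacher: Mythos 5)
Your proposal is correct and follows essentially the same route as the paper's proof: non-expansive projection, the gradient bound for the $\alpha_k^2$ term, the doubly-stochastic contraction $\sum_J\|v^J_k-y\|^2\le\sum_J\|x^J_k-y\|^2$, and the key recentering of the cross term at the average so that convexity is applied only to $f=\sum_J f_J$ while Assumption~\ref{Asmp:GradLip} absorbs the per-agent mismatch, with the same $a\le 1+a^2$ degree-matching trick producing the $S\SB{N}$ and multiplicative pieces. The only differences are cosmetic: you justify the consensus contraction via Jensen's inequality and column stochasticity rather than the paper's Kronecker-product spectral-norm argument, and you pair the Lipschitz residual with $\bar{x}_k-y$ where the paper pairs it with $y-v^J_k$; both give the identical final bound.
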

\begin{proof}
Every iteration in the algorithm involves two steps, a) information fusion using consensus step and b) projected gradient descent on local (possibly non-convex) objective function. 

\noindent The fused state ($v^J_k$) is obtained from the neighbor states using Eq.~\ref{Eq:InfFusALG1},
\begin{align}
    v^J_{k} = \sum_{I \in \mathcal{N}_J} B_{k}[J,I] x^I_{k}. 
\end{align}

\noindent The fused state is further updated based on projected gradient descent given by Eq.~\ref{Eq:ProjGradALG2},
\begin{align}
    x^J_{k+1} = \mathcal{P}_{\mathcal{X}}\left[v^J_{k} - \alpha_k g_J(v^J_{k}) \right]. \label{Eq:ProjGrad}
\end{align}

\noindent Using the non-expansive property of the projection operator used in Eq.~\ref{Eq:ProjGrad}, for all $y \in \mathcal{X}$ ($\mathcal{X}$ is a non-empty, closed convex set) gives us, 
\begin{align}
    \|x^J_{k+1} - y\|^2 &= \|\mathcal{P}_{\mathcal{X}}\left[v^J_{k} - \alpha_k g_J(v^J_{k}) \right] - y\|^2    && \ldots  \mathcal{P}_\mathcal{X}[y] = y \nonumber \\ 
    & \leq \|v^J_{k} - \alpha_k g_J(v^J_{k}) - y\|^2 \nonumber \\
    & \leq \|v^J_k - y\|^2 + \alpha_k^2 \|g_J(v^J_k)\|^2 - 2 \alpha_k (g_J(v^J_k))^T (v^J_k - y) \label{Eq:PGRelation1}
\end{align}

\noindent Adding the inequalities in Eq.~\ref{Eq:PGRelation1} for all agents $J = 1, 2, \ldots, S$ we get the following inequality,
\begin{align}
    \sum_{J=1}^S \|x^J_{k+1} - y\|^2 \leq \sum_{J=1}^S \|v^J_k - y\|^2 + \alpha_k^2 \sum_{J=1}^S \|g_J(v^J_k)\|^2 - 2 \alpha_k \sum_{J=1}^S (g_J(v^J_k))^T (v^J_k - y) 
\end{align}

\noindent We further use bounds on gradients  (Assumption~\ref{Asmp:SubBound}), $\|g_J(x)\| \leq L_J$ for all $J = 1, 2, \ldots, S$ and get,
\begin{align}
    \sum_{J=1}^S \|x^J_{k+1} - y\|^2 \leq \sum_{J=1}^S \|v^J_k - y\|^2 + \alpha_k^2 \sum_{J=1}^S L_J^2 - 2 \alpha_k \sum_{J=1}^S (g_J(v^J_k))^T (v^J_k - y) \label{Eq:IterateRelation1}
\end{align}

Now, we use consensus relationship used in information fusion. We start by stacking the state vector for all agents component wise. We use $\tilde{.}$ to denote a vector that is stacked by its coordinates (see Eq.~\ref{Eq:TildeVecDef} for definition). And, $\tilde{y}$ denotes $J$ copies of $y$ vector stacked coordinate wise. We know that in D-dimension the consensus step can be rewritten using Kronecker product of D-dimension identity matrix ($I_D$) and the doubly stochastic weight matrix ($B_k$) \cite{fax2004information}. 
\begin{align}
    \tilde{v}_{k} &= (I_D \otimes B_k) \tilde{x}_{k} && \ldots \text{Consensus Step} \label{Eq:E5}\\
    \tilde{v}_{k} - \tilde{y} &= (I_D \otimes B_k) (\tilde{x}_{k} - \tilde{y}) && \ldots \text{Eq.\eqref{Eq:E5} and } (I_D \otimes B_k)\tilde{y} = \tilde{y}
\end{align}

\noindent We now compare norms of both sides (2-norm),
\begin{align}
    \|\tilde{v}_{k} - \tilde{y}\|_2^2 &= \|(I_D \otimes B_k) (\tilde{x}_{k} - \tilde{y})\|_2^2 && \ldots \text{Norms of equal vectors are equal}  \\
    &\leq \|(I_D \otimes B_k)\|_2^2 \|(\tilde{x}_{k} - \tilde{y})\|_2^2 && \ldots \text{$\|Ax\|_2 \leq \|A\|_2 \|x\|_2$}
\end{align}

\noindent We use the following property of eigenvalues of Kronecker product of matrices. If $A$ ($m$ eigenvalues given by $\lambda_i$, with $i = 1, 2, \ldots, m$) and $B$ ($n$ eigenvalues given by $\mu_j$, with $j = 1, 2, \ldots, n$) are two matrices then the eigenvalues of the Kronecker product $A \otimes B$ are given by $\lambda_i \mu_j$ for all $i$ and $j$ ($mn$ eigenvalues). Hence, the eigenvalues of $I_D \otimes B_k$ are essentially $D$ copies of eigenvalues of $B_k$. Since $B_k$ is a doubly stochastic matrix, its eigenvalues are upper bounded by 1. Clearly, $\|(I_D \otimes B_k)\|_2^2 = \lambda_{\max}((I_D \otimes B_k)^\dagger(I_D \otimes B_k)) \leq 1$. This follows from the fact that $(I_D \otimes B_k)^\dagger(I_D \otimes B_k)$ is a doubly stochastic matrix since product of two doubly stochastic matrices is also doubly stochastic. 
\begin{align}
    \|\tilde{v}_{k} - \tilde{y}\|_2^2 &\leq \|(\tilde{x}_{k} - \tilde{y})\|_2^2.
\end{align} 

\noindent Furthermore, the square of the norm of a stacked vector is equal to sum of the square of the norms of all agents. 
\begin{align}
    \sum_{J=1}^S \|v^J_{k} - y\|^2 = \|\tilde{v}_{k} - \tilde{y}\|_2^2 \leq \|(\tilde{x}_{k} - \tilde{y})\|_2^2 = \sum_{J=1}^S \|x^J_{k} - y\|^2 \label{Eq:E6}
\end{align}

\noindent Merging the inequalities established above in Eq.~\ref{Eq:IterateRelation1} and Eq.~\ref{Eq:E6} we get, 
\begin{align}
    \sum_{J=1}^S \|x^J_{k+1} - y\|^2 \leq \sum_{J=1}^S \|x^J_k - y\|^2 + \alpha_k^2 \sum_{J=1}^S L_J^2 \underbrace{- 2 \alpha_k \sum_{J=1}^S (g_J(v^J_k))^T (v^J_k - y)}_{\Lambda}. \label{Eq:IterateRelation2}    
\end{align}

\noindent Typically, at this step one would use convexity of $f_J(x)$ to simplify the term $\Lambda$ in Eq.~\ref{Eq:IterateRelation2}. However, since $f_J(x)$ may be non-convex, and hence we need to follow a few more steps before we arrive at the iterate lemma.  

We further consider the fused state iterates $v^J_k$, the average $\bar{v}_k$ and the deviation of iterate from the average, $\delta^J_k = v^J_k - \bar{v}_k$. We further use gradient Lipschiztness (Assumption~\ref{Asmp:GradLip}) to arrive at the following relation,
\begin{align}
    g_J(v^J_k) = g_J(\bar{v}_k) + &l^J_k, \text{ where } \|l^J_k\| \leq N_J \|v^J_k - \bar{v}_k\| = N_J \|\delta^J_k\|. \label{Eq:UnrollGradient1} \\
    \max_{J} \|l^J_k\| &= \max_{J} \{ N_J \|\delta^J_k\| \} \leq \SB{N} \max_J \|\delta^J_k\| \label{Eq:UnrollGradient2Bound}
\end{align}

We use the above expressions in Eq.~\ref{Eq:IterateRelation2} to further bound the term $\Lambda$.
\begin{align}
    \Lambda &= - 2 \alpha_k \sum_{J=1}^S \left[ (g_J(v^J_k))^T (v^J_k - y) \right] = 2 \alpha_k \sum_{J=1}^S \left[ (g_J(v^J_k))^T (y - v^J_k) \right] \nonumber \\
    &= 2 \alpha_k \sum_{J=1}^S \left[ (g_J(\bar{v}_k) + l^J_k)^T (y - \bar{v}_k - \delta^J_k) \right] && \ldots v^J_k = \bar{v}_k + \delta^J_k \nonumber \\
    &= 2 \alpha_k \left[\underbrace{\sum_{J=1}^S g_J(\bar{v}_k)^T (y - \bar{v}_k)}_{T_1} +  \underbrace{\sum_{J=1}^S g_J(\bar{v}_k)^T(-\delta^J_k)}_{T_2}  + \underbrace{\sum_{J=1}^S (l^J_k)^T (y - v^J_k)}_{T_3}\right] \label{Eq:LambdaE1} 
\end{align}

\noindent Individual terms in Eq.~\ref{Eq:LambdaE1} can be bound in the following way,
\begin{align}
    T_1 &= \sum_{J=1}^S g_J(\bar{v}_k)^T (y - \bar{v}_k) = (\sum_{J=1}^S g_J(\bar{v}_k))^T(y - \bar{v}_k) && \ldots \text{$(y - \bar{v}_k)$ is independent of $J$} \nonumber \\
    &= g(\bar{v}_k)^T(y-\bar{v}_k) \leq f(y) - f(\bar{v}_k) && \ldots \text{$f(x)$ is convex} \\
    T_2 &= \sum_{J=1}^S g_J(\bar{v}_k)^T(-\delta^J_k) \leq \sum_{J=1}^S \|g_J(\bar{v}_k)^T\| \|(-\delta^J_k)\| \nonumber \\ 
    &\leq \max_J \|\delta^J_k\| \sum_{J=1}^S L_J \leq \SB{L} \max_J \|\delta^J_k\| && \ldots \text{$\|g_J(x)\|\leq L_J$ and $\SB{L} = \sum_{J=1}^S L_J$} \\
    T_3 &= \sum_{J=1}^S (l^J_k)^T (y - v^J_k) \leq \max_J \|l^J_k\| \sum_{J=1}^S \|v^J_k - y\| \nonumber \\
    &\leq  \SB{N} \max_J \|\delta^J_k\| \sum_{J=1}^S \|v^J_k - y\| && \ldots \text{Eqs.~\ref{Eq:UnrollGradient1}, \ref{Eq:UnrollGradient2Bound} and $\SB{N} = \sum_{J=1}^S N_J$}
\end{align}

\noindent We can further use the property $2 \|a\| \leq 1 + \|a\|^2$ to bound term $T_3$.
\begin{align}
    T_3 &\leq \SB{N} \max_J \|\delta^J_k\| \sum_{J=1}^S \|v^J_k - y\| \leq \SB{N} \max_J \|\delta^J_k\| \left[ \sum_{J=1}^S \left( 1+ \|v^J_k - y\|^2\right) \right] \nonumber \\
    & \leq \SB{N} \max_J \|\delta^J_k\| \left[S + \sum_{J=1}^S \|v^J_k - y\|^2 \right] \leq \SB{N} \max_J \|\delta^J_k\| \left[S + \sum_{J=1}^S \|x^J_k - y\|^2 \right] && \ldots \text{Eq.~\ref{Eq:E6}}
\end{align}

\noindent We can use the bounds on $T_1, T_2$ and $T_3$ to get a bound on $\Lambda$.
\begin{align}
    \Lambda \leq 2 \alpha_k \left( -\left(f(\bar{v}_k) - f(y) \right) + \SB{L} \max_J \|\delta^J_k\| + \SB{N} \max_J \|\delta^J_k\| \left[S + \sum_{J=1}^S \|x^J_k - y\|^2 \right]  \right)
\end{align}
\noindent Note that we can replaced, $f(\bar{v}_k)$ with $f(\bar{x}_k)$. This follows from the fact that doubly stochastic matrices preserve iterate averages, i.e. $\bar{v}_k = \bar{x}_k$ (cf. \cite{nedic2009distributed}).
\noindent The iterate update relation hence becomes, 
\begin{align}\noindent 
    \sum_{J=1}^S \|x^J_{k+1} - y\|^2 &\leq \left(1 + 2\alpha_k \SB{N} \max_J \|\delta^J_k\| \right)\sum_{J=1}^S \|x^J_k - y\|^2  - 2 \alpha_k \left(f(\bar{x}_k) - f(y) \right) \nonumber \\
    & + 2 \alpha_k \left(\SB{L} + S\SB{N}\right) \max_J \|\delta^J_k\| + \alpha_k^2 \sum_{J=1}^S L_J^2. 
\end{align}
$\hfill \blacksquare$
\end{proof}

%%%%%%%%%%%%%%%%%%%%%%%%%%%%%%%%%%%%%%%%%%%%%%%%%%%%%%%%%%%%%%%%%%%%%%%%%%%%%%%%%%%%%%%%%%%%%%%%%%%%%%%%%%%%%%%%%%%%%%%%%%%%%%%%%%

\begin{lemma} \label{Lem:deltaJbound}
Let iterates be generated by Algorithm~\ref{Algo:IterDistOptNCFun}, while Assumptions~\ref{Asmp:Function}, \ref{Asmp:Set}, \ref{Asmp:SubBound}, \ref{Asmp:GradLip}, and  \ref{Asmp:DeltaConn} hold, then there exists constant $\nu<1$, such that the following bound on the maximum (over $J$) disagreement between iterate at agent $J$ and the average iterate given by $\delta^J_k$ (Eqs.~\ref{Eq:deltaDef} and \ref{Eq:deltaDef2}) holds,
%$$\max_J \{ \| \delta^J_{k+1} \| \} \leq \left(\frac{S-1}{2}\right) \left(\nu^k \max_{P,Q} \left( \|x^P_{0,0} - x^Q_{0,0}\| \right) + 4 \bar{M}\SB{L} \sum_{i = 1}^k \left( \alpha_i \nu^{k-i+1} \right) \right), $$
$$\max_J\{\|\delta^J_{k+1}\|\} \leq \frac{S-1}{S} \left(\nu^{k+1} \max_{P,Q} \left( \|x^P_{0,0} - x^Q_{0,0}\| \right) + \SB{L} \sum_{i = 1}^k \left( \alpha_i \nu^{k-i} \right) \right).$$
\end{lemma}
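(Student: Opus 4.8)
\noindent\emph{Proof proposal.} The plan is to derive a scalar recursion for $\max_J\|\delta^J_k\|$ in which the consensus step supplies geometric contraction while the projected-gradient step contributes a bounded, mean-free perturbation, and then to unroll that recursion. Throughout I take $\delta^J_k = x^J_k-\bar x_k$ as in Eqs.~\ref{Eq:deltaDef}--\ref{Eq:deltaDef2}.

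First I would isolate the effect of the descent step. Setting $d^J_k := x^J_{k+1}-v^J_k$ and noting that $v^J_k\in\mathcal{X}$ is a convex combination of feasible iterates, the non-expansiveness of the projection (Eq.~\ref{Eq:EUCPROJINEQUAL}) and the gradient bound (Assumption~\ref{Asmp:SubBound}) give
\begin{align*}
\|d^J_k\| = \big\|\mathcal{P}_{\mathcal{X}}[v^J_k-\alpha_k g_J(v^J_k)]-\mathcal{P}_{\mathcal{X}}[v^J_k]\big\| \leq \alpha_k\|g_J(v^J_k)\| \leq \alpha_k L_J .
\end{align*}
Because the doubly stochastic weights preserve the average ($\bar v_k=\bar x_k$) and the disagreements are mean-free ($\sum_I\delta^I_k=0$), subtracting $\bar x_{k+1}=\bar x_k+\tfrac1S\sum_I d^I_k$ from $x^J_{k+1}=v^J_k+d^J_k$ produces the exact identity $\delta^J_{k+1}=\sum_I B_k[J,I]\,\delta^I_k+\big(d^J_k-\tfrac1S\sum_I d^I_k\big)$.

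Next I would bound the two pieces. The perturbation is mean-free, so $d^J_k-\tfrac1S\sum_I d^I_k=\tfrac1S\sum_{I\neq J}(d^J_k-d^I_k)$, and with $\|d^J_k-d^I_k\|\le\alpha_k(L_J+L_I)$ a short count gives $\max_J\big\|d^J_k-\tfrac1S\sum_I d^I_k\big\|\le\tfrac{S-1}{S}\SB{L}\,\alpha_k$, which already accounts for the factor $\tfrac{S-1}{S}$ and the constant $\SB{L}$ in the statement. The first piece is a convex average of the $\delta^I_k$ (weights $B_k[J,I]\ge0$ summing to one), so on its own it never increases $\max_J\|\delta^J_k\|$; the substance of the lemma is that under connectivity (Assumption~\ref{Asmp:DeltaConn}), the uniform lower bound $\eta$ on nonzero weights, and double stochasticity, this averaging \emph{strictly} contracts the disagreement, i.e. there is $\nu<1$ with $\max_J\big\|\sum_I B_k[J,I]\delta^I_k\big\|\le\nu\max_J\|\delta^J_k\|$. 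Combining the two estimates yields the one-step inequality $\max_J\|\delta^J_{k+1}\|\le\nu\max_J\|\delta^J_k\|+\tfrac{S-1}{S}\SB{L}\,\alpha_k$.

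Finally I would unroll this linear recursion. Iterating from the initial configuration produces the geometric term $\nu^{k+1}\max_J\|\delta^J_0\|$ together with the convolution sum $\tfrac{S-1}{S}\SB{L}\sum_i\nu^{k-i}\alpha_i$; bounding the initial disagreement by the initial diameter via $\delta^J_0=\tfrac1S\sum_I(x^J_0-x^I_0)$, so that $\max_J\|\delta^J_0\|\le\tfrac{S-1}{S}\max_{P,Q}\|x^P_{0,0}-x^Q_{0,0}\|$, and factoring out $\tfrac{S-1}{S}$ gives exactly the claimed bound. I expect the main obstacle to be the strict-contraction step: a single connected doubly stochastic matrix need not contract the max-disagreement in one step (plain convexity only delivers factor one), so I anticipate having to extract $\nu<1$ from the spectral/ergodicity properties of the consensus matrices — for instance the second singular value restricted to $\mathbf{1}^\perp$, or the geometric convergence of products of such matrices to $\tfrac1S\mathbf{1}\mathbf{1}^T$ — and to argue that this $\nu$ is uniform over the admissible time-varying topologies. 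Keeping the perturbation accounting tight enough to land exactly at $\SB{L}\alpha_i$ rather than a looser multiple is the secondary point requiring care.
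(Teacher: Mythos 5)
Your overall architecture is the same as the paper's: a one-step bound in which the consensus averaging contributes a factor $\nu$ and the projected-gradient step contributes an additive $O(\alpha_k \SB{L})$ perturbation (via non-expansiveness of $\mathcal{P}_\mathcal{X}$ and Assumption~\ref{Asmp:SubBound}), followed by unrolling and the conversion $\max_J\|\delta^J_{k+1}\|\le\frac{S-1}{S}\max_{I,J}\|x^I_{k+1}-x^J_{k+1}\|$. Your exact identity $\delta^J_{k+1}=\sum_I B_k[J,I]\delta^I_k+(d^J_k-\tfrac1S\sum_I d^I_k)$ is in fact a cleaner decomposition than the paper's rearrangement of pairwise differences into $\sum_{P,Q}\eta_{P,Q}(x^P_k-x^Q_k)$, and your perturbation accounting is tight.

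The genuine gap is exactly the step you flag: the existence of a uniform one-step contraction factor $\nu<1$. Two distinct problems arise. First, under the hypotheses actually listed in the lemma (connectivity plus nonzero weights lower-bounded by $\eta$), one-step contraction is simply false: on a $4$-node path with scalar deviations $(1,1,-1,-1)$, agent $1$ averages only over agents $1$ and $2$ and its deviation remains equal to $1$, so neither the diameter nor $\max_J\|\delta^J_k\|$ decreases. The paper resolves this not by a spectral argument but by quietly introducing an \emph{additional} hypothesis inside the proof (Assumption~\ref{Asmp:Scrambling}: $B_k$ is a scrambling matrix), under which any two rows share a positive column and the pairwise rearrangement yields $\sum\eta_{P,Q}\le\nu<1$; your proposal, read against the stated assumptions alone, cannot be completed without either this extra assumption or a multi-step (products of matrices) argument of the kind you mention. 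Second, even granting scrambling, the contraction that scrambling delivers is for the diameter seminorm $\max_{I,J}\|y^I-y^J\|$, not directly for the seminorm $\max_J\|\delta^J\|$ in which you pose your recursion; since $\sum_I B_k[J,I]\delta^I_k$ is a convex combination of the $\delta^I_k$, plain convexity gives only factor $1$, and passing through the equivalence $\max_J\|\delta^J\|\le\mathrm{diam}\le 2\max_J\|\delta^J\|$ costs a factor of $2$ that can destroy $\nu<1$. The fix is the paper's ordering: run the geometric recursion on $\max_{P,Q}\|x^P_k-x^Q_k\|$, unroll it, and only at the very end convert to $\max_J\|\delta^J_{k+1}\|$ via the $\frac{S-1}{S}$ inequality.
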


\begin{proof}
We use Kronecker product to write consensus step as shown in Eq.~\ref{Eq:E5}. This step is equivalent to the following form of representing the consensus step,
\begin{equation}
v^I_{k} = \sum_{J=1}^S B_k [I,J] x^J_{k} \label{Eq:InfoFused1}
\end{equation}
where $v^I_{k}$ and $x^J_{k}$ represent the fused parameter vector at agent $I$ and parameter vector at agent $J$ at time step $\{k\}$ while $B_k [I,J]$ is a scalar representing the $I^{th}$ row and $J^{th}$ column entries of matrix $B_k$. We know from Eq.~\ref{Eq:InfoFused1} that the difference between fused parameter vector at agent $I$ and $J$ can be written as,
\begin{equation}
v^J_k - v^I_k = \sum_{L=1}^S \left(B_k[J,L] - B_k[I,L]\right) x^L_k.
\label{Eq:ROWSTOC}
\end{equation}

Since, $B_k$ is doubly stochastic, clearly the coefficients of states in Eq.~\ref{Eq:ROWSTOC} add up to zero (i.e. $\sum_{J=1}^S \left( B_k[J,L] - B_k[I,L] \right) = 0$). Collecting all positive coefficients and negative coefficients and rearranging we get the following equation,
\begin{align}
v^J_{k+1} - v^I_{k+1} = \sum_{P,Q} \eta_{P,Q} (x^P_{k} - x^Q_{k}), \qquad \ldots \forall \; I, G 
\label{Eq:REARRANG}
\end{align}
where, $\eta_{P,Q} \geq 0$ is the weight associated to servers $P$ and $Q$ and $\eta_{P,Q} \geq 0$. Note that all coefficients $\eta_{P,Q}$ refer to some $J, I$ pair at time $k$. For simplicity in notation we will ignore $I, J$ and $k$ without any loss of generality or correctness.

\begin{assumption} \label{Asmp:Scrambling}
We assume the transition matrix $B_k$ to be a scrambling matrix. \cite{seneta2006non}
\end{assumption}
We have from Assumption~\ref{Asmp:Scrambling}, any two rows of  $B_k$ matrix have a non-zero column entry. And since any entry of the matrix is less than 1, the difference is also strictly less than 1. Hence, $\sum \eta_{PQ} < 1$. By taking norm on both sides of Eq.~\ref{Eq:REARRANG}, recalling  Assumption~\ref{Asmp:Scrambling} and using triangle inequality we get, for all $I,J$, 
\begin{equation}
\|v^J_{k} - x^I_{k}\| \leq \left( \sum \eta_{P,Q} \right) \max_{P,Q} \|x^P_{k} - x^Q_{k}\|. 
\end{equation}
Since the above inequality is valid for all $I, G$, we can rewrite the above relation as,
\begin{align}
 \max_{I, J} \|v^J_{k} - v^I_{k}\| \leq \max_{P,Q}\left( \sum \eta_{P,Q} \right) \max_{P,Q} \|x^P_{k} - x^Q_{k}\|. \label{Eq:LAB1}
\end{align}
Note that, $\max_{P,Q} \left( \sum \eta_{P,Q} \right)$ is dependent only on the topology at time $k$ (i.e. the doubly stochastic weight matrix given by $B_k$). Due to the countable nature of possible topologies for $S$ agents, we can define a new quantity $\nu = \max_k \{ \max_{P,Q} \{ \sum \eta_{P,Q} \} \}$ \footnote{Note that $\eta_{P,Q}$ is dependent on $I, J$ pair and $k$.}. By definition, $\max_{P,Q} \{ \sum \eta_{P,Q} \} \leq \nu, \; \forall \; k \geq 0$ and since $\max_{P,Q} \{ \sum \eta_{P,Q} \} < 1 \; \forall \; k \geq 0$, we have $\nu < 1$.

We can write the difference between parameter vectors at agent $I$ and $J$ as,
\begin{align}
x^J_{k+1} - x^I_{k+1} = \mathcal{P}_\mathcal{X} \left[ v^J_k - \alpha_k g_J(v^J_k) \right] - \mathcal{P}_\mathcal{X} \left[ v^I_k - \alpha_k g_I(v^I_k) \right] && \ldots \text{Eq.~\ref{Eq:ProjGradALG2}}
\end{align}
and, further obtain inequality bound using non-expansive property of the projection operator ($\mathcal{X}$ is a non-empty, closed-convex set),
\begin{align}
\|x^J_{k+1} - x^I_{k+1}\|^2 &\leq \| v^J_k - \alpha_k g_J(v^J_k) - v^I_k + \alpha_k g_I(v^I_k) \|^2 \nonumber \\
&\leq \|v^J_k - v^I_k\|^2 + \alpha_k^2 \|g_J(v^J_k) - g^I(v^I_k)\|^2 + 2 \alpha_k \|v^J_k - v^I_k\| \|g^J(x^J_k) - g^I(x^I_k)\|  \nonumber \\
&\leq \|v^J_k - v^I_k\|^2 + \alpha_k^2 (L_J + L_I)^2 + 2 \alpha_k (L_J+L_I) \|v^J_k - v^I_k\| \nonumber \\
&\leq (\|v^J_k - v^I_k\| + \alpha_k (L_J + L_I))^2 \leq (\|v^J_k - v^I_k\| + \alpha_k \SB{L})^2 \nonumber \\
\|x^J_{k+1} - x^I_{k+1}\| &\leq \|v^J_k - v^I_k\| + \alpha_k \SB{L} \label{Eq:DeltaJBound1}
\end{align}

Now we perform maximization on both sides of Eq.~\ref{Eq:DeltaJBound1} to get,
\begin{align}
\max_{I,J} \|x^J_{k+1} - x^I_{k+1}\| \leq \max_{I,J} \|v^J_k - v^I_k\| + \alpha_k \SB{L},
\end{align}
and further use the bound in Eq.~\ref{Eq:LAB1},
\begin{align}
\max_{I,J} \|x^J_{k+1} - x^I_{k+1}\| \leq \nu \max_{P,Q} \|x^P_{k} - x^Q_{k}\| + \alpha_k \SB{L}.
\end{align}

\noindent Now we perform an unrolling operation, and relate the maximum agent disagreement to the initial disagreement between agents (at step $0$).
\begin{align}
\max_{I , J} \|x^J_{k+1} - x^I_{k+1}\| &\leq  \left( \nu \max_{P,Q} \left( \|x^P_{k} - x^Q_{k}\| \right) + \alpha_k \SB{L} \right) \nonumber \\
&\leq \left( \nu \left( \nu \max_{P,Q} \left( \|x^P_{k-1} - x^Q_{k-1}\| \right) + \alpha_{k-1} \SB{L} \right) + \alpha_k  \SB{L} \right) \nonumber \\
&\leq \qquad \ldots \nonumber \\ 
&\leq \left(\nu^{k+1} \max_{P,Q} \left( \|x^P_{0} - x^Q_{0}\| \right) + \SB{L} \sum_{i = 1}^k \left( \alpha_i \nu^{k-i} \right) \right) \label{Eq:ASYMPCONVPARVEC}
\end{align}

\noindent We start with the definition of $\delta^J_{k+1}$ (see  Eq.~\ref{Eq:deltaDef2}) and consider the maximum over all agents,
\begin{align}
\max_J\{\|\delta^J_{k+1}\|\} &= \max_J\{ \|x^J_{k+1} - \bar{x}_{k+1}\| \} = \max_J\{ \|x^J_{k+1} - \frac{1}{S} \sum_{I=1}^S x^I_{k+1}\| \} \nonumber \\
&= \max_J\{ \|\frac{1}{S} \sum_{I \neq J}^S (x^J_{{k+1}} - x^I_{{k+1}})\| \} \leq \frac{S-1}{S} \max_{I,J} \|x^J_{{k+1}} - x^I_{{k+1}}\|. \label{Eq:ScramRes1}
\end{align}
Together with Eq.~\ref{Eq:ASYMPCONVPARVEC}, we arrive at the desired expression from the statement of lemma, 
\begin{align}
\max_J\{\|\delta^J_{k+1}\|\} \leq \frac{S-1}{S} \left(\nu^{k+1} \max_{P,Q} \left( \|x^P_{0} - x^Q_{0}\| \right) + \SB{L} \sum_{i = 1}^k \left( \alpha_i \nu^{k-i} \right) \right) \label{Eq:MAXDJBOUND}
\end{align} $\hfill \blacksquare $
\end{proof}

Note that we can do away with Assumption~\ref{Asmp:Scrambling}, and prove similar bound on the maximum disagreement between agent iterates and its average for any connected graph. For simplicity, we assume that the transition matrix is scrambling. 

\begin{claim} [Consensus] \label{Cl:Consensus}
The agent parameter vectors achieve consensus asymptotically. $$\lim_{k \rightarrow \infty} \max_{I,J} \|x^J_{k+1} - x^I_{k+1}\| = 0.$$
\end{claim}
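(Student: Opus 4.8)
The plan is to read the claim directly off the bound already established inside the proof of Lemma~\ref{Lem:deltaJbound}. Eq.~\ref{Eq:ASYMPCONVPARVEC} gives the unrolled estimate
\[
\max_{I,J}\|x^J_{k+1}-x^I_{k+1}\| \leq \nu^{k+1}\max_{P,Q}\|x^P_{0}-x^Q_{0}\| + \SB{L}\sum_{i=1}^k \alpha_i\,\nu^{k-i},
\]
so it suffices to show that the right-hand side vanishes as $k\to\infty$. I would treat the two summands separately.

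For the first summand, I would note that every iterate is the output of a projection onto $\mathcal{X}$ and hence lies in $\mathcal{X}$, which is compact by Assumption~\ref{Asmp:Set}; the initial disagreement $\max_{P,Q}\|x^P_{0}-x^Q_{0}\|$ is therefore a fixed finite constant, bounded by the diameter of $\mathcal{X}$. Since $\nu<1$ by Lemma~\ref{Lem:deltaJbound}, the geometric factor $\nu^{k+1}\to 0$, and the entire first summand vanishes.

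The second summand is the crux. First I would record that the step-size conditions in Eq.~\ref{Eq:LearnStepCond} force $\alpha_k\to 0$: square-summability $\sum_{k}\alpha_k^2<\infty$ requires its terms to tend to zero, so $\alpha_k^2\to 0$ and hence $\alpha_k\to 0$. The obstacle is that $\sum_{k}\alpha_k=\infty$, so one cannot bound the convolution $\sum_{i=1}^k\alpha_i\nu^{k-i}$ by the tail of a convergent series, and Lemma~\ref{Lem:Ram} does not apply with $\zeta_i=\alpha_i$ for the same reason. Instead I would use the standard splitting argument: fix $\epsilon>0$, choose $M$ so that $\alpha_i<\epsilon$ for all $i>M$, and split the sum at $M$. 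The head obeys $\sum_{i=1}^M\alpha_i\nu^{k-i}\le \nu^{k-M}\sum_{i=1}^M\alpha_i\to 0$, since the finitely many early terms are annihilated by the geometric factor as $k\to\infty$ with $M$ fixed; the tail obeys $\sum_{i=M+1}^k\alpha_i\nu^{k-i}<\epsilon\sum_{j\geq 0}\nu^j=\epsilon/(1-\nu)$, which is uniformly small. Letting $k\to\infty$ and then $\epsilon\to 0$ shows the second summand also vanishes.

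Combining the two estimates yields $\max_{I,J}\|x^J_{k+1}-x^I_{k+1}\|\to 0$, which is the claim. The only genuinely delicate point is the vanishing of the convolution sum $\SB{L}\sum_{i=1}^k\alpha_i\nu^{k-i}$, where the divergence of $\sum_k\alpha_k$ forces the $\epsilon$-splitting rather than a one-line tail bound; everything else is bookkeeping on top of the already-proved Lemma~\ref{Lem:deltaJbound}.
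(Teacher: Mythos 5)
Your proof is correct and follows essentially the same route as the paper: both read the claim off the unrolled bound in Eq.~\ref{Eq:ASYMPCONVPARVEC}, kill the first term via $\nu^{k+1}\to 0$, and handle the convolution $\sum_{i}\alpha_i\nu^{k-i}$ by the same $\epsilon$-splitting into a head (annihilated by the geometric factor) and a tail (bounded by $\epsilon/(1-\nu)$). Your explicit justification that $\alpha_k\to 0$ via square-summability is a small tidiness improvement over the paper, which implicitly assumes it.
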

\begin{proof}
We know from Eq.~\ref{Eq:ASYMPCONVPARVEC} that the maximum disagreement between any two agents ($I$ and $J$) at time $k$ is given by,
\begin{align}
    \max_{I,J} \|x^J_{k+1} - x^I_{k+1}\| \leq \left( \nu^{k+1} \max_{P,Q} \left( \|x^P_{0} - x^Q_{0}\| \right) + \SB{L} \sum_{i = 1}^k \left( \alpha_i \nu^{k-i} \right) \right). \label{Eq:SerConTemp1} 
\end{align}

The first term on the right hand side of above expression tends to zero as $k \rightarrow \infty$, since $\nu < 1$ and $\nu^{k+1} \rightarrow 0$ as $k \rightarrow 0$. 

Let us consider $\epsilon_0 > 0$, and define $0 < \epsilon < \epsilon_0 \frac{1 - \nu}{2\SB{L}\nu}$. Since, $\epsilon_0 > 0$ and $\nu < 1$ we know that such an $\epsilon$ exists. We now show that the second term in Eq.~\ref{Eq:SerConTemp1}, decreases to zero too. Since $\alpha_k$ is non-increasing sequence, $\exists \ K = K(\epsilon) \in \mathbb{N}$ such that $\alpha_i < \epsilon$ for all $i \geq K$. Hence we can rewrite the second term for $k > K$ as,
$$ \SB{L} \sum_{i = 1}^k  \left( \alpha_i \nu^{k-i} \right) = \SB{L} \left[\underbrace{\left( \alpha_0 \nu^k + \alpha_1 \nu^k-1 + \ldots + \alpha_{K-1} \nu^{k-K+1}\right)}_{A} + \underbrace{ \left(\alpha_{K} \nu^{k-K} + \ldots + \alpha_k \nu^0 \right)}_{B}\right]$$
\noindent We can bound the individual terms A and B by using the monotonically non-increasing property of $\alpha_i$ and sum of a geometric series.
\begin{align}
A &=  \alpha_0 \nu^k + \alpha_1 \nu^{k-1} + \ldots + \alpha_{K-1} \nu^{k-K+1} \nonumber \\
&\leq \alpha_0 (\nu^k + \nu^{k-1} + \ldots + \nu^{k-K+1}) && \ldots \alpha_1 \geq \alpha_i \; \forall \ i \geq 1 \nonumber \\
&\leq \alpha_0 \nu^{k-K+1} \left(\frac{1 - \nu^{K}}{1-\nu} \right) \leq \frac{\alpha_0 \nu^{k-K+1}}{1-\nu}   && \ldots \nu < 1 \implies 1 - \nu^{K} < 1, \; \forall \ k > K\label{Eq:ABOUNDS1}\\
B &= \alpha_K \nu^{k-K} + \ldots + \alpha_k \nu^0 \nonumber \\
& < \epsilon \nu \left( \frac{1-\nu^{k-K+1}}{1-\nu}\right) \leq \frac{\epsilon \nu}{1-\nu}  && \ldots \alpha_i < \epsilon, \; \forall i \geq K \text{ and } \nu < 1 \label{Eq:BBOUNDS1}
\end{align}

\noindent Since the right side of inequality in Eq.~\ref{Eq:ABOUNDS1} is monotonically decreasing in $k$ ($\nu < 1$) with limit $0$ as $k \rightarrow \infty$. Hence $\exists K_{0_1} > K$ such that $\nu^{k-K+1} < \epsilon$, $\forall \; k \geq K_{0_1}$ and hence $A < \frac{\epsilon_0}{2 \SB{L}}$. Substituting the upper bound for $\epsilon$ in right side of inequality in Eq.~\ref{Eq:BBOUNDS1}, we get $\exists K_{0_2} > K$ such that $B < \frac{\epsilon_0}{2 \SB{L}}$, $\forall \; k \geq K_0$. 

Using the bounds obtained above (on $A$ and $B$), we conclude, $\forall \; \epsilon_0 > 0$, $\exists \; K_0 = \max \{K_{0_1}, K_{0_2}\}$ such that $\SB{L} \sum_{i=0}^k \left( \alpha_i \nu^{k-i} \right) < \epsilon_0$, $\forall \; k > K_0$. Clearly (from the $\epsilon-\delta$ definition of limit), $\lim_{k \rightarrow \infty} \SB{L} \sum_{i=0}^k \left( \alpha_i \nu^{k-i} \right) = 0$. This limit together with the limit of first term on the right side of Eq.~\ref{Eq:SerConTemp1} being zero, implies, $\lim_{k \rightarrow \infty} \max_{I,J} \|x^J_{k+1} - x^I_{k+1}\| = 0$. Thus we have asymptotic consensus of the agent parameter vectors.

$\hfill \blacksquare$
\end{proof}

\begin{theorem} \label{Th:ConvMain}
Let Assumptions~\ref{Asmp:Function}, \ref{Asmp:Set}, \ref{Asmp:SubBound}, \ref{Asmp:GradLip} and \ref{Asmp:DeltaConn} hold with $\mathcal{X}^*$ being a nonempty bounded set. Also assume a diminishing step size rule presented in Eq.~\ref{Eq:LearnStepCond}. Then, for a sequence of iterates $\{x^J_{k}\}$ generated by an distributed optimization algorithm (Algorithm~\ref{Algo:IterDistOptNCFun}) the  iterate average ($\bar{x}_k$) converge to an optimum in $\mathcal{X}^*$.
\end{theorem}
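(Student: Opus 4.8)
The plan is to combine the per-iteration inequality of Lemma~\ref{Lem:IterateConvRelation} with the Robbins--Siegmund almost-supermartingale theorem (Lemma~\ref{Lem:RobSiegConv}), and then use the consensus result (Claim~\ref{Cl:Consensus}) to transfer the conclusion from the ensemble of agent iterates to the single average $\bar{x}_k$. Concretely, I would fix an arbitrary optimum $x^* \in \mathcal{X}^*$, set $y = x^*$ in Lemma~\ref{Lem:IterateConvRelation}, and write $u_k = \sum_{J=1}^S \|x^J_k - x^*\|^2$. Since every update ends in a projection onto $\mathcal{X}$, each $x^J_k \in \mathcal{X}$, so the average $\bar{x}_k \in \mathcal{X}$ (convexity) and hence $f(\bar{x}_k) \geq f^*$; thus $v_k := 2\alpha_k(f(\bar{x}_k) - f^*) \geq 0$. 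Matching the remaining terms to the template of Lemma~\ref{Lem:RobSiegConv} gives the identifications $\gamma_k := 2\alpha_k \SB{N}\max_J\|\delta^J_k\|$ and $w_k := 2\alpha_k(\SB{L} + S\SB{N})\max_J\|\delta^J_k\| + \alpha_k^2\sum_J L_J^2$. As the algorithm is deterministic, the probability-one statements collapse to a purely real-analysis convergence lemma, so the sub-$\sigma$-field machinery is vacuous here.

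The decisive step is verifying the summability hypotheses $\sum_k \gamma_k < \infty$ and $\sum_k w_k < \infty$; since $\sum_k\alpha_k^2 < \infty$ is assumed, both reduce to establishing $\sum_k \alpha_k\max_J\|\delta^J_k\| < \infty$. I would feed the disagreement bound of Lemma~\ref{Lem:deltaJbound} into this sum. The geometric head term $\nu^{k}\max_{P,Q}\|x^P_0 - x^Q_0\|$ yields a convergent series because $\alpha_k$ is bounded and $\nu < 1$. For the convolution tail $\SB{L}\sum_{i=1}^k \alpha_i\nu^{k-i}$, I would use monotonicity of the step size ($\alpha_{k+1} \leq \alpha_i$ for $i \leq k$) to bound $\alpha_{k+1}\alpha_i \leq \alpha_i^2$, reducing the resulting double sum to $\sum_k\sum_{i=1}^k \alpha_i^2\nu^{k-i}$, which is finite by Lemma~\ref{Lem:Ram} with $\zeta_i = \alpha_i^2$ and $\beta = \nu$. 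This gives both summability conditions.

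With the hypotheses in place, Lemma~\ref{Lem:RobSiegConv} delivers two facts: the sequence $u_k = \sum_J\|x^J_k - x^*\|^2$ converges (for each fixed $x^* \in \mathcal{X}^*$), and $\sum_k \alpha_k(f(\bar{x}_k) - f^*) < \infty$. Combining the latter with $\sum_k\alpha_k = \infty$ forces $\liminf_k f(\bar{x}_k) = f^*$, so some subsequence $\bar{x}_{k_m}$ satisfies $f(\bar{x}_{k_m}) \to f^*$. Next, using the decomposition $u_k = S\|\bar{x}_k - x^*\|^2 + \sum_J\|\delta^J_k\|^2$ (valid because $\sum_J\delta^J_k = 0$) together with $\sum_J\|\delta^J_k\|^2 \to 0$ from Claim~\ref{Cl:Consensus}, the convergence of $u_k$ upgrades to convergence of the scalar sequence $\|\bar{x}_k - x^*\|$ for \emph{every} $x^* \in \mathcal{X}^*$.

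The last step, which I expect to demand the most care, is pinning the limit to a single optimum. Since $\mathcal{X}$ is compact, the subsequence $\bar{x}_{k_m}$ has a further subsequence converging to some $\bar{x} \in \mathcal{X}$; continuity of $f$ (the bounded-gradient assumption makes $f$ Lipschitz on $\mathcal{X}$) gives $f(\bar{x}) = f^*$, so $\bar{x} \in \mathcal{X}^*$. I would then invoke the convergence of $\|\bar{x}_k - x^*\|$ at the \emph{specific} choice $x^* = \bar{x}$: the sequence $\|\bar{x}_k - \bar{x}\|$ converges, and since it has a subsequence tending to $0$, its limit is $0$, so the entire sequence $\bar{x}_k \to \bar{x} \in \mathcal{X}^*$. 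The subtlety worth emphasizing is that Robbins--Siegmund alone yields only convergence of the squared distances, not of the iterate itself; evaluating the limiting distance at the cluster point $\bar{x}$ is precisely what converts ``distance converges'' plus ``$\liminf$ optimal'' into genuine convergence of $\bar{x}_k$.
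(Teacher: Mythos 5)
Your proposal is correct and follows essentially the same route as the paper: Lemma~\ref{Lem:IterateConvRelation} with $y = x^*$, summability of $\sum_k \alpha_k \max_J\|\delta^J_k\|$ via Lemma~\ref{Lem:deltaJbound} and Lemma~\ref{Lem:Ram}, the deterministic form of Lemma~\ref{Lem:RobSiegConv}, and the $\liminf$ argument combined with Claim~\ref{Cl:Consensus}. If anything, your final step---the decomposition $u_k = S\|\bar{x}_k - x^*\|^2 + \sum_J\|\delta^J_k\|^2$ and evaluating the convergent distance sequence at the cluster point $\bar{x}$ to upgrade subsequential to full convergence---is executed more carefully than the paper's own conclusion, which asserts that $\bar{x}_k$ enters $\mathcal{X}^*$ somewhat loosely.
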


\begin{proof}
We intend to prove convergence using deterministic version of Lemma~\ref{Lem:RobSiegConv}. We begin by using the relation between iterates given in Lemma~\ref{Lem:IterateConvRelation} with $y = x^* \in \mathcal{X}^*$,
\begin{align}
    \sum_{J=1}^S \|x^J_{k+1} - x^*\|^2 &\leq \left(1 + \underbrace{2\alpha_k \SB{N} \max_J \|\delta^J_k\|}_{\gamma_k} \right)\sum_{J=1}^S \|x^J_k - x^*\|^2  - 2 \alpha_k \left(f(\bar{x}_k) - f(x^*) \right) \nonumber \\
    & + \underbrace{2 \alpha_k \left(\SB{L} + S\SB{N}\right) \max_J \|\delta^J_k\| + \alpha_k^2 \sum_{J=1}^S L_J^2}_{w_k}. \label{Eq:PROOF0}
%&\| \bar{x}_{0,k+1} - x^*\|^2 \leq \left(1 +\underbrace{\frac{\alpha_k^2}{S^2}F + \frac{\alpha_k}{S} \bar{M} \SB{N} \max_{J}\{\|\delta^J_k\|\}}_{q_k} \right) \| \bar{x}_{0,k} - x^* \|^2 - \underbrace{\frac{2\alpha_k M}{S} (f(\bar{x}_{0,k}) - f(x^*))}_{v_k} \nonumber \\ 
%& + \underbrace{\frac{\alpha_k^2}{S^2} \left( 2 \bar{M}^2 \SB{L}^2 +  \sum_{i=1}^{ \Delta}\left(\sum_{J=1}^S \sum_{hx=1}^C W_{i-1,k}[J,h] \  L_h \right)^2  + F + 8 \alpha_k \bar{M}^3 \SB{L}^2 \SB{N} \right) + \left(\frac{2 \alpha_k^2 \bar{M}^2 \SB{L} \SB{N}}{S^2} +\frac{\alpha_k}{S} \bar{M} \SB{N} \right) \max_{J}\{\|\delta^J_k\|\}}_{w_k} \nonumber % $+ \max_{J}\{\|\delta^J_k\|\}$
\end{align}

We check if the above inequality satisfies the conditions in Lemma~\ref{Lem:RobSiegConv} viz. $\sum_{k=0}^\infty \gamma_k < \infty$ and $\sum_{k=0}^\infty w_k < \infty$. $\gamma_k$ and $w_k$ are defined as shown in Eq.~\ref{Eq:PROOF0}.

We first show that $\sum_{k = 0}^\infty \alpha_k \max_J \|\delta^J_k\| < \infty$.
\begin{align}
    \sum_{k = 0}^\infty & \alpha_k \max_J \|\delta^J_k\| \leq \frac{S-1}{S} \sum_{k = 0}^\infty \alpha_k  \left( \nu^{k+1} \max_{P,Q} \left( \|x^P_{0} - x^Q_{0}\| \right) + \SB{L} \sum_{i = 1}^k \left( \alpha_i \nu^{k-i} \right) \right) && \ldots \text{Eq.~\ref{Eq:MAXDJBOUND}} \nonumber \\
    &\leq \frac{S-1}{S} \left( \sum_{k = 0}^\infty \alpha_k  \nu^{k+1} \max_{P,Q} \left( \|x^P_{0} - x^Q_{0}\| \right) + \sum_{k = 0}^\infty \SB{L} \alpha_k \sum_{i = 1}^k \left( \alpha_i \nu^{k-i} \right) \right) \nonumber \\
    &\leq \frac{S-1}{S} \left( \max_{P,Q} \left( \|x^P_{0} - x^Q_{0}\| \right) \sum_{k = 0}^\infty \alpha_k  \nu^{k+1}  + \sum_{k = 0}^\infty \SB{L} \sum_{i = 1}^k \left( \alpha_i^2 \nu^{k-i} \right) \right) && \ldots \alpha_k \leq \alpha_i, \forall i \leq k \nonumber \\
    &\leq \frac{S-1}{S} \left( \max_{P,Q} \left( \|x^P_{0} - x^Q_{0}\| \right) \sum_{k = 0}^\infty \alpha_k  \nu^{k+1}  + \SB{L} \sum_{k = 0}^\infty \sum_{i = 1}^k \left( \alpha_i^2 \nu^{k-i} \right) \right). \label{Eq:PROOF1}
\end{align}
In the above expression, we can show that the first term is convergent by using the ratio test. We observe that, $$ \limsup_{k \rightarrow \infty} \frac{\alpha_{k+1} \nu^{k+2}}{\alpha_{k} \nu^{k+1}} = \limsup_{k \rightarrow \infty} \frac{\alpha_{k+1} \nu}{\alpha_k} < 1 \implies \sum_{k=0}^\infty \alpha_k \nu^k < \infty,$$
since, $\alpha_{k+1} \leq \alpha_k$ and $\nu < 1$. Arriving at the second term involves using the non-increasing property of $\alpha_i$, i.e. $\alpha_k \leq \alpha_i \forall i \leq k $. Now, we use Lemma~\ref{Lem:Ram}, with $\zeta_j = \alpha_j^2$ (where $\sum_{k=0}^ \infty \zeta_k < \infty$) and show that the second term in the above expression is finite, i.e. $(S-1)/S \SB{L} \sum_{k = 0}^\infty \sum_{i = 1}^k \left( \alpha_i^2 \nu^{k-i} \right) < \infty$. Together using finiteness of both parts on the right side of Eq.~\ref{Eq:PROOF1} we have proved, 
\begin{align}
\sum_{k = 0}^\infty \alpha_k \max_J \|\delta^J_k\| < \infty
\label{Eq:PROOF2}
\end{align}

We now begin to prove the finiteness of sum of $\gamma_k$ sequence, $\sum_{k=0}^\infty \gamma_k < \infty$. 
\begin{align}
\sum_{k=0}^\infty \gamma_k &= 2 \SB{N} \sum_{k=0}^\infty \left(\alpha_k \max_J\|\delta^J_k\| \right) < \infty.
\end{align}
%Clearly the term is finite based on the analysis given above (see Eq.~\ref{Eq:PROOF2}). We have hereby proved, $\sum_{k=0}^\infty \gamma_k < \infty$. 

We can similarly prove $\sum_{k=0}^\infty w_k < \infty$.
\begin{align}
    \sum_{k=0}^\infty w_k  = 2 (\SB{L}+S\SB{N}) \sum_{k=0}^\infty \alpha_k \max_J \|\delta^J_k\| +  (\sum_{J=1}^S L_J^2)\sum_{k=0}^\infty \alpha_k^2 < \infty
\end{align}
The first term above is finite as proved in Eq.~\ref{Eq:PROOF2} and the second term is finite due due to the assumption on learning rate (Eq.~\ref{Eq:LearnStepCond}).

We can now use the deterministic version of Lemma~\ref{Lem:RobSiegConv} to show the convergence of iterate average to the optimum. We know from proof above that $\sum_{k=0}^\infty \gamma_k < \infty$ and $\sum_{k=0}^\infty w_k < \infty$.  As a consequence of Lemma~\ref{Lem:RobSiegConv}, we get that the sequence $\eta_k^2$ converges to some point and $\sum_{k=0}^\infty 2 \frac{\alpha_k M}{S} (f(\bar{x}_{k}) - f(x^*)) < \infty$.

We use $\sum_{k=0}^\infty 2 \frac{\alpha_k M}{S} (f(\bar{x}_{k}) - f(x^*)) < \infty$ to show the convergence of the iterate-average to the optimum. Since we know $\sum_{k=0}^\infty \alpha_k = \infty$, it follows directly that $\lim \inf_{k \rightarrow \infty} f(\bar{x}_{k}) = f(x^*)$. And due to the continuity of $f(x)$, we know that the sequence of iterate average must enter the optimal set $\mathcal{X}^*$ (i.e. $\bar{x}_{k} \in \mathcal{X}^*$). Since $\mathcal{X}^*$ is bounded (compactness in $\mathbb{R}^D$), we know that there exists a iterate-average subsequence $\bar{x}_{k_l} \subseteq \bar{x}_k$ that converges to some $x^* \in \mathcal{X}^*$. 

We know from Claim~\ref{Cl:Consensus} that the agents agree to a parameter vector asymptotically (i.e. $x^J_{k} \rightarrow x^I_{k}, \ \forall I \neq J$ as $k \rightarrow \infty$). Hence, all agents agree to the iterate average. This along with the convergence of iterate-average to the optimal solution gives us that all agents converge to the optimal set $\mathcal{X}^*$ (i.e. $x^J_{k} \in \mathcal{X}^*, \ \forall J, \text{ as } k \rightarrow \infty$). $\hfill \blacksquare$
\end{proof}

\subsection{Extension} \label{Sec:Extensions}
A graph is called $Q$-connected, if agents form a connected component at least once every $Q$ iterations. We can relax the requirement on connectedness and easily make similar claims for a $Q$ connected graph (instead of Assumption~\ref{Asmp:DeltaConn}). Using the analysis technique developed above (also see \cite{gade16distoptclientserver}) and the analysis in \cite{ram2010distributed}, it is straightforward to show that Algorithm~\ref{Algo:IterDistOptNCFun} can optimize convex sum of non-convex functions as posed in Section~\ref{Sec:Problem} for $Q$-connected topology. 

\section{Discussion}
In this report we show that distributed optimization algorithms (Algorithm~\ref{Algo:IterDistOptNCFun}) can correctly optimize a convex function with non-convex partitions. The analysis technique developed above easily allows for other extensions as mentioned above in Section~\ref{Sec:Extensions}.

\subsection{Privacy} \label{Sec:Privacy}
Privacy has emerged to become one of the most important and challenging aspect of machine learning and distributed optimization. We propose two methods that can enhance privacy in distributed optimization. Both methods can be easily shown to perform distributed optimization correctly using results and analysis techniques proposed in this report.  

\subsubsection*{Function Partitioning}
The first approach to introduce privacy in distributed optimization is by constructing fictitious partitions of the individual objective function ($f_i(x)$). Further, these partitions are used in gradient descent step and several different state updates are created by an agent. Now these dissimilar states are shared with different neighbors. We show in Section~\ref{Sec:Motivation} that this strategy can be easily analyzed and proved to work correctly by using the analysis developed in this work. We hypothesize that selecting (and/or constructing) these function partitions dynamically can be a very successful strategy of introducing privacy in distributed optimization. Details about the strategy and privacy analysis for this strategy will be explored in a future technical report. 

\subsubsection*{Random Function Sharing}
We propose an alternate privacy enhancing strategy inspired from secure multi-party aggregation algorithm in \cite{abbe2012privacy}. In this strategy, every agent $I$ sends a randomly generated deterministic function $R_{I,J}(x)$ to neighboring agents $J$. These transmissions are assumed to be secure. Hence, any agent $I_0$ has access to all randomly generated deterministic functions that it has transmitted ($R_{I_0,K}(x)$, for some $K$) and those that it has received ($R_{P,I_0}(x)$, for some $P$). The distributed optimization problem retains its structure with $f_i(x)$ being replaced by $\hat{f}_i(x)$, 
$$\hat{f}_i(x) \triangleq f_i(x) + \sum_{P \ : \ i \in\mathcal{N}_P} R_{P,i}(x) - \sum_{K \in \mathcal{N}_i} R_{i,K}(x).$$
Note, since the randomly generated functions may not necessarily be convex, our new individual objective functions need not be convex. 

It is not hard to see that the sum of all new individual functions ($\hat{f}_i(x)$) is equal to the sum of all old individual functions ($f_i(x)$). This follows from the fact that the randomly generated functions cancel out during aggregation of individual objective functions.
$$\sum_{i=1}^S \hat{f}_i(x) = \sum_{i=1}^S \left( f_i(x) + \sum_{P \ : \ i \in\mathcal{N}_P} R_{P,i}(x) - \sum_{K \in \mathcal{N}_i} R_{i,K}(x) \right) = \sum_{i=1}^S {f}_i(x)$$

Directly applying convergence results from this report, we can state that the distributed protocol in Algorithm~\ref{Algo:IterDistOptNCFun} with new individual objective function will solve the original problem (minimizing $f(x) = \sum_{i=1}^S f_i(x)$). Details about the strategy and privacy analysis for this strategy will be explored in a future technical report.

\bibliography{Central.bib}

\begin{thebibliography}{10}

\bibitem{ram2010distributed}
S.~S. Ram, A.~Nedi{\'c}, and V.~V. Veeravalli, ``Distributed stochastic
  subgradient projection algorithms for convex optimization,'' {\em Journal of
  optimization theory and applications}, vol.~147, no.~3, pp.~516--545, 2010.

\bibitem{xiao2006optimal}
L.~Xiao and S.~Boyd, ``Optimal scaling of a gradient method for distributed
  resource allocation,'' {\em Journal of optimization theory and applications},
  vol.~129, no.~3, pp.~469--488, 2006.

\bibitem{bullo2009distributed}
F.~Bullo, J.~Cortes, and S.~Martinez, {\em Distributed control of robotic
  networks: a mathematical approach to motion coordination algorithms}.
\newblock Princeton University Press, 2009.

\bibitem{boyd2011distributed}
S.~Boyd, N.~Parikh, E.~Chu, B.~Peleato, and J.~Eckstein, ``Distributed
  optimization and statistical learning via the alternating direction method of
  multipliers,'' {\em Foundations and Trends{\textregistered} in Machine
  Learning}, vol.~3, no.~1, pp.~1--122, 2011.

\bibitem{nesterov2012efficiency}
Y.~Nesterov, ``Efficiency of coordinate descent methods on huge-scale
  optimization problems,'' {\em SIAM Journal on Optimization}, vol.~22, no.~2,
  pp.~341--362, 2012.

\bibitem{liu2015asynchronous}
J.~Liu and S.~J. Wright, ``Asynchronous stochastic coordinate descent:
  Parallelism and convergence properties,'' {\em SIAM Journal on Optimization},
  vol.~25, no.~1, pp.~351--376, 2015.

\bibitem{agarwal2011distributed}
A.~Agarwal and J.~C. Duchi, ``Distributed delayed stochastic optimization,'' in
  {\em Advances in Neural Information Processing Systems}, pp.~873--881, 2011.

\bibitem{Singh2014}
C.~Singh, A.~Nedi{\'{c}}, and R.~Srikant, ``{Random Block-Coordinate Gradient
  Projection Algorithms},'' in {\em 53rd IEEE Conference on Decision and
  Control}, pp.~185--190, 2014.

\bibitem{Nedic2007}
A.~Nedi{\'{c}} and A.~Ozdaglar, ``{On the rate of convergence of distributed
  subgradient methods for multi-agent optimization},'' {\em Proceedings of the
  IEEE Conference on Decision and Control}, vol.~54, no.~1, pp.~4711--4716,
  2007.

\bibitem{nedic2011asynchronous}
A.~Nedi{\'{c}}, ``Asynchronous broadcast-based convex optimization over a
  network,'' {\em IEEE Transactions on Automatic Control}, vol.~56, no.~6,
  pp.~1337--1351, 2011.

\bibitem{Nedi2015}
A.~Nedi{\'{c}} and A.~Olshevsky, ``{Distributed Optimization Over Time-Varying
  Directed Graphs},'' {\em IEEE Transactions on Automatic Control}, vol.~60,
  no.~3, pp.~601--615, 2015.

\bibitem{zhang2014asynchronous}
R.~Zhang and J.~T. Kwok, ``Asynchronous distributed admm for consensus
  optimization.,'' in {\em ICML}, pp.~1701--1709, 2014.

\bibitem{huang2015differentially}
Z.~Huang, S.~Mitra, and N.~Vaidya, ``Differentially private distributed
  optimization,'' in {\em Proceedings of the 2015 International Conference on
  Distributed Computing and Networking}, p.~4, ACM, 2015.

\bibitem{rabbat2004distributed}
M.~Rabbat and R.~Nowak, ``Distributed optimization in sensor networks,'' in
  {\em Proceedings of the 3rd international symposium on Information processing
  in sensor networks}, pp.~20--27, ACM, 2004.

\bibitem{zhu2012distributed}
M.~Zhu and S.~Mart{\'\i}nez, ``On distributed convex optimization under
  inequality and equality constraints,'' {\em IEEE Transactions on Automatic
  Control}, vol.~57, no.~1, pp.~151--164, 2012.

\bibitem{gade16distoptclientserver}
S.~Gade and N.~Vaidya, ``Distributed optimization for client-server
  architecturewith negative gradient weights,'' {\em arXiv preprint
  arXiv:1608.03866}, 2016.

\bibitem{sra2012optimization}
S.~Sra, S.~Nowozin, and S.~J. Wright, {\em Optimization for machine learning}.
\newblock MIT Press, 2012.

\bibitem{bianchi2013convergence}
P.~Bianchi and J.~Jakubowicz, ``Convergence of a multi-agent projected
  stochastic gradient algorithm for non-convex optimization,'' {\em IEEE
  Transactions on Automatic Control}, vol.~58, no.~2, pp.~391--405, 2013.

\bibitem{di2016next}
P.~Di~Lorenzo and G.~Scutari, ``Next: In-network nonconvex optimization,'' {\em
  IEEE Transactions on Signal and Information Processing over Networks},
  vol.~2, no.~2, pp.~120--136, 2016.

\bibitem{sun2016distributed}
Y.~Sun, G.~Scutari, and D.~Palomar, ``Distributed nonconvex multiagent
  optimization over time-varying networks,'' {\em arXiv preprint
  arXiv:1607.00249}, 2016.

\bibitem{abbe2012privacy}
E.~A. Abbe, A.~E. Khandani, and A.~W. Lo, ``Privacy-preserving methods for
  sharing financial risk exposures,'' {\em The American Economic Review},
  vol.~102, no.~3, pp.~65--70, 2012.

\bibitem{nedic2009distributed}
A.~Nedi{\'c} and A.~Ozdaglar, ``Distributed subgradient methods for multi-agent
  optimization,'' {\em Automatic Control, IEEE Transactions on}, vol.~54,
  no.~1, pp.~48--61, 2009.

\bibitem{bertsekas1976goldstein}
D.~P. Bertsekas, ``{On the Goldstein-Levitin-Polyak gradient projection
  method},'' {\em Automatic Control, IEEE Transactions on}, vol.~21, no.~2,
  pp.~174--184, 1976.

\bibitem{robbins1985convergence}
H.~Robbins and D.~Siegmund, ``A convergence theorem for non negative almost
  supermartingales and some applications,'' in {\em Herbert Robbins Selected
  Papers}, pp.~111--135, Springer, 1985.

\bibitem{bertsekas2003convex}
D.~P. Bertsekas, A.~Nedi{\'{c}}, A.~E. Ozdaglar, {\em et~al.}, {\em Convex
  analysis and optimization}.
\newblock Athena Scientific, 2003.

\bibitem{fax2004information}
J.~A. Fax and R.~M. Murray, ``Information flow and cooperative control of
  vehicle formations,'' {\em IEEE transactions on automatic control}, vol.~49,
  no.~9, pp.~1465--1476, 2004.

\bibitem{seneta2006non}
E.~Seneta, {\em Non-negative matrices and Markov chains}.
\newblock Springer Science \& Business Media, 2006.

\end{thebibliography}
\bibliographystyle{ieeetr}

\end{document}